\DeclareMathOperator*{\argmax}{arg\,max}
\DeclareMathOperator*{\argmin}{arg\,min}
\let\oldnl\nl
\newcommand{\nonl}{\renewcommand{\nl}{\let\nl\oldnl}}
\newtheorem{theorem}{Theorem}[]
\newtheorem{corollary}{Corollary}[]
\newtheorem{proposition}{Proposition}[]
\newtheorem{lemma}{Lemma}[]
\newtheorem{definition}{Definition}[]
\begin{document}

\title{A stochastic game framework for patrolling a border}
\date{}
\author{Matthew Darlington \thanks{STOR-i Centre for Doctoral Training, Lancaster University, Lancaster LA1 4YR, M.Darlington@lancaster.ac.uk}, Kevin D. Glazebrook \thanks{Department of Management Science, Lancaster University, Lancaster LA1 4YX, K.Glazebrook@lancaster.ac.uk}, David S. Leslie \thanks{Department of Mathematics and Statistics, Lancaster University, Lancaster, LA1 4YR, D.Leslie@lancaster.ac.uk}, Rob Shone \thanks{Department of Management Science, Lancaster University, Lancaster LA1 4YX, R.Shone@lancaster.ac.uk}, Roberto Szechtman \thanks{Operations Research Department, Naval Postgraduate School, Monterey, California 93943, rszechtm@nps.edu}}

\maketitle

\begin{abstract}
In this paper we consider a stochastic game for modelling the interactions between smugglers and a patroller along a border. The problem we examine involves a group of cooperating smugglers making regular attempts to bring small amounts of illicit goods across a border. A single patroller has the goal of preventing the smugglers from doing so, but must pay a cost to travel from one location to another. We model the problem as a two-player stochastic game and look to find the Nash equilibrium to gain insight to real world problems. Our framework extends the literature by assuming that the smugglers choose a continuous quantity of contraband, complicating the analysis of the game. We discuss a number of properties of Nash equilibria, including the aggregation of smugglers, the discount factors of the players, and the equivalence to a zero-sum game. Additionally, we present algorithms to find Nash equilibria that are more computationally efficient than existing methods. We also consider certain assumptions on the parameters of the model that give interesting equilibrium strategies for the players. 
\end{abstract}

\section{Introduction}

Ranging from drug trafficking across the U.S.-Mexico border (\cite{Gutierrez2021}), to oil smuggling out of Nigeria (\cite{Ojewale2021}), and illegal fishing in the continental shelf off South America (\cite{Goodman2021}), the problem of how to patrol a border is fundamental to government organisations worldwide. How to patrol well is a challenging problem because it is infeasible to protect everywhere simultaneously due to constraints on resources, and thus a carefully thought out strategy is required. The associated trade-off is a complex one: if the patrols are too predictable the smugglers may be able to easily figure out where and when they can get through undetected. However, if the patrollers act too randomly they may not be adequately protecting the most vulnerable sections. In this work we introduce a stochastic game model for patrolling a border, detail how the strategies for both the patroller and smugglers can be found, and then analyse the solutions obtained.

Specifically, we consider a scenario where a single patroller attempts to stop a group of cooperating smugglers taking items across a border. The border here is thought of as being a finite set of locations which could be roads, border control posts or even an area of air, land or sea. The smugglers attempt to send some illicit items through these locations, and it is the patroller's goal to find an efficient strategy for stopping these items from getting through. It is assumed that there are known and fixed rewards and penalties that the smugglers receive or incur if they are respectively successful or not, proportional to the quantity of items they attempt to smuggle. Similarly, the patroller receives a reward or penalty depending on whether they stop the smugglers or not. There is a single smuggler fixed at each location. However, the patroller must traverse the geography of the border and pay a cost to do so. The patroller and smugglers make these decisions through time, needing to account for both their immediate reward and how their future rewards will be affected.

There is a significant operations research literature on patrol problems that focusses on modelling real-world situations. Examples include \cite{Urrutia2000} looking at the protection of galleries containing expensive paintings, or \cite{Richard1972} considering the daily patrol patterns of a police officer in the United States. An example where developed methods have been implemented in practice concerns the protection of the Los Angeles International Airport and is by \cite{Pita2009}. Pita et al.'s work models the problem as a Bayesian Stackelberg game to give the patrollers a randomised method to protect the airport from threats. The authors reported ``very positive feedback about the deployment''. 

We consider a game theoretic approach to the problem of patrolling a border where the patroller and the smugglers take actions simultaneuously. This was first considered by \cite{Alpern2011} who look at protecting against a single attempt by a smuggler that takes a fixed time to complete. An important assumption made in the paper is that the outcome of the attempt results in a win or loss for the patroller. \cite{Lin2013} consider a similar model but introduce the possibility of the attempt taking a non-deterministic length of time to complete. \cite{Lin2014} further advance this work by considering the situation where the patroller has a chance to miss the attempt taking place. Another extension of this model is by \cite{McGrath2017} who solve a problem in which there is a non-trivial difference in both the time taken to travel around the locations and to check if an attempt is in progress at each location. The application of patrolling a border is considered by \cite{Papadaki2016} and \cite{Alpern2019}, but it is still based on the assumption that the adversary makes only one attempt. Recent work in the patrolling literature includes \cite{Alpern2021} who consider a problem in which the patroller chooses whether or not to wear a uniform and \cite{Lin2021} who considers how to optimally patrol the perimeter of a location. 

There are two key assumptions made by these papers that are not consistent with the problem of patrolling a border we consider. Firstly, in our setting a single successful smuggler is not catastrophic to the patroller. Instead, we have a small penalty incurred by the patroller that depends on the amount of items that are trafficked. Secondly, the normal-form game approaches discussed in previous papers can only consider trying to stop one attempt without taking into consideration what happens next. This means that while we might catch the smugglers once, the patroller could be left exposed for an upcoming series of attempts. This motivates us to develop a new stochastic game model for patrolling a border. We will justify its benefits empirically with numerical experiments.

There are two bodies of work in the literature which look at a similar problem to ours. \cite{Grant2020} examine patrolling a border against opponents who make many small attempts to smuggle items. However, the smugglers are assumed to be acting at random by the authors whereas we make the stronger assumption that the smugglers pick actions strategically. The model closest to ours is discussed in \cite{Filar1986} and \cite{Filar1985}, but is focused on a different problem setting. They consider a travelling inspector who checks factories to detect the illegal dumping of materials. The model differs to the one considered here in having only a finite number of possible actions available to the inspector's adversaries. We consider a setup in which the actions available to the smugglers lie in a continuous set. The change from finite to continuous action set significantly complicates the task of solving the game. We provide a solution which meets this challenge.

From the modelling perspective, our main contribution is the extension of the smugglers' action set to a continuous interval. Having this larger set of alternatives gives a more realistic formulation, especially so in the case in drug, oil, and weapons smuggling scenarios, where the smugglers' action set is essentially a continuum. Furthermore, we propose a stochastic game rather than a normal-form game for our problem and so the patroller must consider the trade-off between immediate and future rewards. The increased resolution of our model adds additional complexity to the challenge of finding the Nash equilibria for the patroller and the smugglers. This prompts the main methodological contributions of our work. These begin with the elucidation of the structure in the Nash equilbria of the game and lead to the development of solution algorithms to overcome the resulting computational issues. We prove that these algorithms find or converge to the optimal solution and, moreover, that they are computationally faster than existing methods in cases where a comparison is meaningful.

The rest of the paper is organised as follows. In Section 2, we present our stochastic game framework for patrolling a border. Section 3 gives an overview of Nash equilibria that arise in the model, and establishes several of their properties. Section 4 provides an analysis of the methods to find Nash equilibria in the border patrol game. In Section 5, we make additional assumptions on the cost function which leads to more detailed characterisations of equilibria. An empirical analysis of the performance of our approach is given in Section 6, along with a discussion of the solutions to specific instances of our model. Section 7 concludes our paper with a summary and suggestions for future work. All proofs not included in the main body of the paper can be found in an Appendix.

\section{Model Description}

We consider a border made up of $n$ locations labelled from 1 to $n$ inclusive. In the paper we will use the notation $[n]$ to denote the set of all locations where $[n] = \{1, \dots, n\}$. Time will be modelled in discrete steps $t = 0, 1, \dots$. Such time steps are natural here , where decisions could be taken on an hourly or daily basis. 

We present the model in this section by taking the smugglers collectively to be a single player. This will be the case throughout the paper unless explicitly noted otherwise. Thus we look to define a stochastic game between two players: a single patroller and the smugglers. The patroller begins each time step $t$ at some location $s_t$, which we take to be the current state of the system. Hence, the state space of the game is $\mathcal{S} = [n]$. The patroller picks a location to defend, $b_t$, and the smugglers pick a quantity of items from the interval $[0,1]$ to send to each location. We write the smugglers' action as $\boldsymbol{a}_t = (a^1_t, \dots a^n_t)$, where $a_t^i$ is the quantity sent to the location $i$. Note that without loss of generality this accounts for quantities from the interval $[0,q]$ for some $q > 0$ by a scaling of the actions. Hence, the action space of the patroller and smugglers respectively at each epoch are $\mathcal{A}_{pat} = [n]$ and $\mathcal{A}_{smug} = [0,1]^n$. Both the patroller and the smugglers take an action simultaneously, with no knowledge of the action chosen by the opponent. The state of the system at the next time step is the previous action of the patroller, and so
\begin{equation}
    \mathbb{P}(s_{t+1} = b \; | \; b_t = b, \boldsymbol{a}_t = \boldsymbol{a}, s_t = s) = \mathbb{P}(s_{t+1} = b \; | \; b_t = b) = 1 \label{transition}
\end{equation} 
for all $b \in \mathcal{A}_{pat}$, $\boldsymbol{a} \in [0,1]^n$ and $s \in \mathcal{S}$. As we will see, the players can hoose their actions according to some probability distribution which results in a random state transition in the game.

The patroller catches all items sent by the smugglers to the location they have chosen to defend. At every other location, the items are successfully smuggled. Smugglers receive a fixed reward of $r_i > 0$ for each unit of item smuggled through the location $i$. However, if caught, the smugglers must pay a penalty related to the amount smuggled. This is determined by the cost function $C: [0,1] \to \mathbb{R}_{+}$. We assume that $C$ is an increasing function with $C(0) = 0$. The patroller's payoff is equal to the negative of the smugglers' payoff, but she must additionally pay a cost for moving from one location to another. These movement costs are given by the parameters $m_{i,j} \geq 0 \; i,j \in [n]$. Thus, the reward functions of the patroller and the smugglers respectively are as follows:
\begin{align*}
    R_{pat}(b, \boldsymbol{a}, s) &= C(a_b) - \sum_{i \in [n] \setminus \{b\}} r_i a_i - m_{s,b} \\
    R_{smug}(b, \boldsymbol{a}) &= \sum_{i \in [n] \setminus \{b\}} r_i a_i - C(a_b) .
\end{align*}
The game continues for an infinite number of time steps, with rewards discounted at a rate of $\gamma \in [0,1)$ for the patroller. Although smugglers can each have an individual discount rate of $\lambda_i \in [0,1)$, we prove in Section 3 that the assumption that smugglers have a discount rate $\gamma$ is without loss of generality. 

A pure action is an action which a player is able to perform. In our case these are the elements of the sets $\mathcal{A}_{pat}$ and $\mathcal{A}_{smug}$ for the patroller and smugglers respectively. Instead of picking a pure action deterministically, players can draw an action according to a probability distribution over their pure actions, which may depend on the current state of the system $s$. A stationary mixed strategy for either player is a $|\mathcal{S}|$-tuple of probability distributions over the pure actions of a player,
\begin{gather*}
    \boldsymbol{\Pi} = (\boldsymbol{\pi}^1, \dots, \boldsymbol{\pi}^n) \in (\Delta([n]))^n \\
    \boldsymbol{\Xi} = (\boldsymbol{\xi}^1, \dots, \boldsymbol{\xi}^n) \in (\Delta([0,1])^n)^n
\end{gather*}
where $\Delta(S)$ denotes the set of probability distributions with set $S$ as their support and $S^k$ is the $k$-ary Cartesian power of $S$ for a natural number $k$. The results in subsequent sections will establish that it is sufficient to consider only stationary strategies, rather than strategies with a dependence on the current time step. The strategies for the patroller and smugglers respectively given that the state of the system is $i$ are $\boldsymbol{\pi}^i$ and $\boldsymbol{\xi}^i$. Assuming the strategies are fixed over time, we write the expected discounted reward for both players over an infinite horizon as
\begin{equation*}
    U_{pat}(\boldsymbol{\Pi}, \boldsymbol{\Xi}) = \mathbb{E}_{\boldsymbol{\Pi}, \boldsymbol{\Xi}, \mathbb{P}_0} \left[ \sum_{t = 0}^\infty \gamma^t R_{pat}(b_t, \boldsymbol{a}_t, s_t) \right] ,
\end{equation*}
and, 
\begin{equation}
    U_{smug}(\boldsymbol{\Pi}, \boldsymbol{\Xi}) = \mathbb{E}_{\boldsymbol{\Pi}, \boldsymbol{\Xi}, \mathbb{P}_0} \left[ \sum_{t = 0}^\infty \gamma^t R_{smug}(b_t, \boldsymbol{a}_t) \right]. \label{U}
\end{equation}
In (\ref{U}) where expectations are taken with respect to the strategies of both players so that $b_t \sim \boldsymbol{\pi}^{s_t}$ and $\boldsymbol{a}_t \sim \boldsymbol{\xi}^{s_t}$, and also with respect to the probability distribution $\mathbb{P}_0$ over the initial state $s_0$. Since the outcome of one player depends on the action of the other, it is not possible to maximise the rewards of the players independently. We give the definition of a Nash equilibria as first given by \cite{Nash1950}.
\begin{definition}
The strategies $\boldsymbol{\Pi}^*$ and $\boldsymbol{\Xi}^*$ for the patroller and smugglers respectively form a Nash equilibrium for the game if and only if,
\begin{gather*}
    U_{pat}(\boldsymbol{\Pi}^*, \boldsymbol{\Xi}^*) \geq U_{pat}(\boldsymbol{\Pi}, \boldsymbol{\Xi}^*) \; \forall \; \boldsymbol{\Pi} \in (\Delta([n]))^n \\
    U_{smug}(\boldsymbol{\Pi}^*, \boldsymbol{\Xi}^*) \geq U_{smug}(\boldsymbol{\Pi}^*, \boldsymbol{\Xi}) \; \forall \; \boldsymbol{\Xi} \in (\Delta([0,1]^n))^n .
\end{gather*}
\end{definition}

Nash equilibria give the most natural solution for our model, in that they provide the best possible lower bound of the discounted expected reward to the patroller. This could be operationally important if, for example, the smugglers were to discover the strategy of the patroller and were able to optimise their strategy using this knowledge.

\section{Properties of Nash Equilibria}

We now seek to prove properties of the Nash equilibria in our model, which can help us to understand the behaviour of the patroller and the smugglers. Firstly, we note that the model described in the previous section falls into a class of stochastic games called single controller stochastic games.
\begin{definition}
Suppose we have an $n$-player stochastic game with players $1, \dots, n$, with player $i$ taking the action $a^i$ from action set $\mathcal{A}^i$. Then the game is a single controller stochastic game with player $j$ as the controller if and only if,
\begin{equation*}
    \mathbb{P}(s_{t+1} = s' \; | \; s_t = s, a_t^1 = a^1, \dots, a_t^n = a^n) = 
    \mathbb{P}(s_{t+1} = s' \; | \; s_t = s, a_t^j = a^j)
\end{equation*}
for all $s, s' \in \mathcal{S}$ and $a^i \in \mathcal{A}^i$ for all players $i$.
\cite{Filar1997}
\end{definition}
It follows from Equation (\ref{transition}) that our model is a single controller stochastic game with the patroller as the controller. The single controller property leads to three results about Nash equilibria in our game: Lemma 1 proving that the smugglers can be assumed without loss of generality to be a single player, Proposition 1 showing discount rates of all players can be assumed to be equal without loss of generality, and Proposition 2 giving a zero-sum formulation of the game with equivalent Nash equilibria.

\subsection{Aggregation of Smugglers}

We first show that, without loss of generality, we can assume that the smugglers act as a single cooperating player. If the smugglers were acting independently of one another, then we would have an $n+1$ player game where each smuggler has a reward function equal to,
\begin{equation*}
    R_{smug}^i(b, \boldsymbol{a}) = \begin{dcases}
    r_i a_i &\text{ if } b \neq i \\
    -C(a_i) &\text{ if } b = i
    \end{dcases}
\end{equation*}
for $i \in [n]$. The action space of smuggler $i$, $\mathcal{A}_{smug}^i$, is equal to the unit interval and his strategy, $\boldsymbol{\Xi}_i$, is in $\Delta([0,1])$. Therefore, a set of strategies for every player in this nonaggregated game is denoted $(\boldsymbol{\Pi}, \boldsymbol{\Xi}_1, \dots, \boldsymbol{\Xi}_n)$. The patroller's reward function and action space remain the same, as do the state transitions and discount factors. 

We can define a mapping from the reward function in the nonaggregated game to the aggregated game by,
\begin{equation*}
    \sum_{i=1}^n R_{smug}^i(b, \boldsymbol{a}) \to R_{smug}(b, \boldsymbol{a})
\end{equation*}
and a mapping from strategies in the nonaggregated game to the aggregated game by,
\begin{equation*}
    \left( \boldsymbol{\Pi}, \bigtimes_{i=1}^n \boldsymbol{\Xi}_i \right) \to (\boldsymbol{\Pi}, \boldsymbol{\Xi}) .
\end{equation*}
\begin{lemma}
Nash equilibria in the aggregated game coincide with those in the nonaggregated game in that if we have a Nash equilibrium in one game and map the strategies to the other, then it remains a Nash equilibrium.
\end{lemma}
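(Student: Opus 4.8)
The plan is to exploit the additive, coordinatewise-separable structure of the smugglers' rewards together with the single controller property. The starting point is the observation that, by construction, $R_{smug}(b,\boldsymbol{a}) = \sum_{i=1}^n R_{smug}^i(b,\boldsymbol{a})$ and that each summand $R_{smug}^i$ depends on the joint action $\boldsymbol{a}$ only through its own coordinate $a_i$. Because the patroller is the sole controller of the state process by Equation~(\ref{transition}), the realised sequence of states $s_0, s_1, \dots$ is driven by $\boldsymbol{\Pi}$ alone, and at each epoch the two players' draws are independent. The first step is therefore a decomposition lemma: for any patroller strategy $\boldsymbol{\Pi}$ and any (possibly correlated) aggregated smuggler strategy $\boldsymbol{\Xi}$,
\begin{equation*}
    U_{smug}(\boldsymbol{\Pi}, \boldsymbol{\Xi}) = \sum_{i=1}^n U_{smug}^i\big(\boldsymbol{\Pi}, \boldsymbol{\Xi}^{(i)}\big),
\end{equation*}
where $\boldsymbol{\Xi}^{(i)}$ denotes the coordinate-$i$ marginals of $\boldsymbol{\Xi}$ and $U_{smug}^i$ is the infinite-horizon discounted utility of smuggler $i$ in the nonaggregated game. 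This follows from linearity of expectation applied termwise in the discounted sum, since $\mathbb{E}[R_{smug}^i(b_t, \boldsymbol{a}_t) \mid s_t = s]$ depends only on $\boldsymbol{\pi}^s$ and the $i$th marginal of $\boldsymbol{\xi}^s$. The same argument shows that $U_{pat}$ depends on $\boldsymbol{\Xi}$ only through its coordinate marginals, because $R_{pat}$ is likewise separable across the $a_i$.

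With the decomposition in hand, the forward direction is direct. Suppose $(\boldsymbol{\Pi}^*, \boldsymbol{\Xi}_1^*, \dots, \boldsymbol{\Xi}_n^*)$ is a Nash equilibrium of the nonaggregated game and let $\boldsymbol{\Xi}^* = \bigtimes_{i=1}^n \boldsymbol{\Xi}_i^*$. The patroller faces exactly the product law $\boldsymbol{\Xi}^*$ in both games, so her utility and hence her best-response condition are unchanged, and $\boldsymbol{\Pi}^*$ remains optimal against $\boldsymbol{\Xi}^*$. For the smuggler, I would use the decomposition to write $U_{smug}(\boldsymbol{\Pi}^*, \boldsymbol{\Xi}) = \sum_i U_{smug}^i(\boldsymbol{\Pi}^*, \boldsymbol{\Xi}^{(i)})$ for an arbitrary deviation $\boldsymbol{\Xi}$; since each smuggler $i$'s equilibrium condition already guarantees $U_{smug}^i(\boldsymbol{\Pi}^*, \boldsymbol{\Xi}_i^*) \geq U_{smug}^i(\boldsymbol{\Pi}^*, \boldsymbol{\Xi}^{(i)})$, summing over $i$ yields $U_{smug}(\boldsymbol{\Pi}^*, \boldsymbol{\Xi}^*) \geq U_{smug}(\boldsymbol{\Pi}^*, \boldsymbol{\Xi})$, which is the aggregated smuggler's equilibrium condition.

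For the reverse direction, the key step is to neutralise correlation. Given an aggregated equilibrium $(\boldsymbol{\Pi}^*, \boldsymbol{\Xi}^*)$, I would first replace $\boldsymbol{\Xi}^*$ by the product of its coordinate marginals; because both utilities depend on $\boldsymbol{\Xi}^*$ only through those marginals, this substitution changes no payoff and so preserves the equilibrium, allowing us to assume $\boldsymbol{\Xi}^* = \bigtimes_{i=1}^n \boldsymbol{\Xi}_i^*$ is of product form. Taking $\boldsymbol{\Xi}_i^*$ as the $i$th factor, the patroller's condition transfers verbatim as before. For smuggler $i$, if some $\boldsymbol{\Xi}_i'$ gave a strictly higher $U_{smug}^i(\boldsymbol{\Pi}^*, \cdot)$, then the aggregated smuggler could deviate in coordinate $i$ alone and, by the decomposition, strictly increase $U_{smug}$, contradicting the optimality of $\boldsymbol{\Xi}^*$; hence each $\boldsymbol{\Xi}_i^*$ is a best response and $(\boldsymbol{\Pi}^*, \boldsymbol{\Xi}_1^*, \dots, \boldsymbol{\Xi}_n^*)$ is a nonaggregated equilibrium.

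The step I expect to be the main obstacle is precisely this treatment of correlation: the aggregated smuggler's strategy space $(\Delta([0,1]^n))^n$ is strictly larger than the product of the individual spaces, so one must argue that no correlated deviation can outperform the best product-form response. The separability of $R_{smug}$ and $R_{pat}$ in the smugglers' coordinates, inherited through the single controller property so that correlation cannot be leveraged through the state dynamics, is what makes the coordinate marginals a sufficient statistic for both players' payoffs and closes this gap.
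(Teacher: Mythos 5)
Your proof is correct, but it is worth noting that the paper itself does not actually prove Lemma 1: it defers to the analytical proof in Filar (1985) for the travelling inspector problem and records only the guiding intuition, namely that the individual smugglers have independent reward functions and their actions make no difference to the state transitions. Your argument is a rigorous implementation of exactly that intuition: the single controller property makes the law of the state process a functional of $\boldsymbol{\Pi}$ alone; the coordinatewise separability $R_{smug}(b,\boldsymbol{a})=\sum_{i=1}^n R^i_{smug}(b,\boldsymbol{a})$, with each summand depending on $\boldsymbol{a}$ only through $a_i$, yields the termwise decomposition $U_{smug}(\boldsymbol{\Pi},\boldsymbol{\Xi})=\sum_{i=1}^n U^i_{smug}\bigl(\boldsymbol{\Pi},\boldsymbol{\Xi}^{(i)}\bigr)$; and the equilibrium conditions then transfer coordinatewise in both directions. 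What your write-up adds beyond the paper's sketch is the explicit treatment of correlation: the aggregated strategy space $(\Delta([0,1]^n))^n$ strictly contains the set of product-form strategies, so the reverse direction genuinely requires showing that the coordinate marginals are a sufficient statistic for both players' payoffs, which is precisely what your decomposition establishes before you split the aggregated strategy into individual ones. This is the step the paper's intuition glosses over, and it is also where the single controller property does real work: if the smugglers' actions could influence the state transitions, the decomposition would fail and correlated deviations could be profitable, so your identification of this as the crux is accurate rather than incidental.
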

A similar statement can be found by \cite{Filar1985}, and an analytical proof is presented for their travelling inspector problem. The intuition behind the proof is that since the individual smugglers have independent reward functions and since their actions make no difference to the state transitions, neither combining nor splitting the smugglers create an incentive to deviate. A consequence of Lemma 1 is that we can choose whether to analyse the strategy for a single smuggler or the aggregated group, depending on which is more trackable in the context.

\subsection{Discount Rates}

We now move on to discuss the effect of players having different discount rates in our game for patrolling a border. We prove that if every player has an individual discount rate, then Nash equilibria are equivalent to those which occur when all players have the same discount rate as the patroller. 
\begin{proposition}
Suppose that $(\boldsymbol{\Pi}^*, \boldsymbol{\Xi}_1^*, \dots,  \boldsymbol{\Xi}_n^*)$ is a Nash equilibrium for the nonaggregated game in which all players have discount rate $\gamma \in [0,1)$. It remains a Nash equilibrium in any nonaggregated game in which the patroller has a discount rate of $\gamma$, and smuggler $j$ has a discount rate of $\lambda_j \in [0,1)$, $1 \leq j \leq n$.
\end{proposition}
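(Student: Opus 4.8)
The plan is to verify the two Nash inequalities in the new game separately, exploiting the fact that the only change is to the smugglers' discount rates. The patroller's inequality is immediate: her discount rate is still $\gamma$, her own payoff $U_{pat}$ does not involve any smuggler's discount rate, and the opponent strategies $\boldsymbol{\Xi}_1^*, \dots, \boldsymbol{\Xi}_n^*$ are held fixed, so her best-response problem $\max_{\boldsymbol{\Pi}} U_{pat}(\boldsymbol{\Pi}, \boldsymbol{\Xi}^*)$ is literally unchanged from the original game, in which $\boldsymbol{\Pi}^*$ was already optimal. All the work therefore lies in showing that each smuggler $j$'s equilibrium strategy $\boldsymbol{\Xi}_j^*$ remains a best response to $\boldsymbol{\Pi}^*$ after his discount rate is switched from $\gamma$ to $\lambda_j$; since smuggler $j$'s reward does not depend on the other smugglers' actions, this can be checked for each $j$ in isolation.

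The key structural observation I would use is that, because the patroller is the single controller (Equation (\ref{transition})), the law of the state sequence $s_0, s_1, \dots$ is a function of $\boldsymbol{\Pi}^*$ and $\mathbb{P}_0$ alone and is independent of every smuggler's strategy. Moreover $R_{smug}^j(b, \boldsymbol{a})$ depends on $\boldsymbol{a}$ only through $a_j$. Writing $h_j(s; \boldsymbol{\xi}) = \mathbb{E}_{b \sim \boldsymbol{\pi}^{*,s},\, a_j \sim \boldsymbol{\xi}}[R_{smug}^j(b, \boldsymbol{a})]$ for smuggler $j$'s expected one-step reward in state $s$ when he plays $\boldsymbol{\xi}$ against $\boldsymbol{\Pi}^*$, I would expand his total discounted payoff under an arbitrary stationary strategy $\boldsymbol{\Xi}_j = (\boldsymbol{\xi}^1, \dots, \boldsymbol{\xi}^n)$ and discount factor $\delta$ as
\begin{equation*}
    U_{smug}^j(\boldsymbol{\Xi}_j) = \sum_{t=0}^\infty \delta^t \sum_{s \in \mathcal{S}} \mathbb{P}(s_t = s)\, h_j(s; \boldsymbol{\xi}^s) = \sum_{s \in \mathcal{S}} w_\delta(s)\, h_j(s; \boldsymbol{\xi}^s),
\end{equation*}
where $w_\delta(s) = \sum_{t=0}^\infty \delta^t \mathbb{P}(s_t = s) \ge 0$ and the state probabilities $\mathbb{P}(s_t = s)$ do not depend on $\boldsymbol{\Xi}_j$. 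Exchanging the order of summation is justified by absolute convergence, since $h_j$ is bounded and $\delta < 1$.

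The payoff is now a nonnegatively weighted, state-separable sum in which the variable $\boldsymbol{\xi}^s$ appears only in the term for state $s$. Hence maximising over $\boldsymbol{\Xi}_j$ decouples into maximising $h_j(s; \cdot)$ separately in each state, and the set of maximisers in each state is independent of $\delta$, which enters only through the scalar weights $w_\delta(s)$. Consequently $\boldsymbol{\Xi}_j^*$, being a best response under $\gamma$, maximises $h_j(s; \cdot)$ at every state carrying positive weight, and therefore also solves the $\lambda_j$-discounted best-response problem. Applying this to every $j$ and combining with the patroller's unchanged inequality yields the claim.

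The step needing the most care — and what I expect to be the main obstacle — is the treatment of states carrying zero weight. At such a state $\boldsymbol{\xi}^s$ is irrelevant to the payoff, so $\boldsymbol{\xi}^{*,s}$ is vacuously a best response; the point to confirm is that the set of positively-weighted states is not \emph{enlarged} when passing from $\gamma$ to $\lambda_j$, since otherwise new best-response constraints on $\boldsymbol{\Xi}_j^*$ could appear. Because $\{s : w_\delta(s) > 0\}$ equals the set of states reached with positive probability at some finite time whenever $\delta > 0$, this set coincides for all positive discount factors, so no new constraints arise. The only genuinely delicate boundary case is $\gamma = 0$, where the original equilibrium constrains smuggler $j$ only on the support of $\mathbb{P}_0$; this is handled by restricting to $\gamma \in (0,1)$ or by assuming $\mathbb{P}_0$ charges every reachable state.
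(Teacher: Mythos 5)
Your proof is correct and takes essentially the same route as the paper's: both rest on the single-controller property making each smuggler's best-response problem state-separable and myopic (the paper invokes its Corollary 1 here), hence independent of his discount factor, while the patroller's own problem is untouched; your weighted-sum decomposition $U^j_{smug} = \sum_s w_\delta(s)\, h_j(s;\boldsymbol{\xi}^s)$ simply formalizes this, and treating all smugglers at once rather than via the paper's one-smuggler-at-a-time induction is immaterial. Your closing caveat about zero-weight states is in fact a subtlety the paper's proof silently glosses over: under the paper's Definition 1 an equilibrium with $\gamma = 0$ constrains a smuggler's behaviour only on the support of $\mathbb{P}_0$, so the claim can genuinely fail when passing to $\lambda_j > 0$ unless one restricts to $\gamma \in (0,1)$ or assumes $\mathbb{P}_0$ charges every reachable state, making your treatment, if anything, more careful than the paper's own.
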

The consequence of Proposition 1 is that we can assume without loss of generality that all players can be assumed to have a discount rate $\gamma$.
\subsection{Zero-sum formulation of model}
A two-player zero-sum stochastic game is defined as follows.
\begin{definition}
A two-player stochastic game is zero-sum if the reward to one player is always equal to the negative of the reward to the other player.
\end{definition}
Whilst the model introduced in the previous section is not zero-sum, it only differs from one by the inclusion of the cost the patroller must pay to move around the locations. We show that if the game is modified such that the smugglers are assumed to earn a reward equal to the cost of the movement of the patroller, then the Nash equilibria of the game are unchanged. The version of the game where the smugglers get this reward is clearly zero-sum.
\begin{proposition}
    Consider a stochastic game identical to the one introduced in the previous section, but where the reward function for the patroller and smugglers respectively are
    \begin{align*}
    \tilde{R}_{pat}(b, \boldsymbol{a}, s) &= R_{pat}(b, \boldsymbol{a}, s) ,\\
    \tilde{R}_{smug}(b, \boldsymbol{a}, s) &= -\tilde{R}_{pat}(b, \boldsymbol{a}, s) = R_{smug}(b, \boldsymbol{a}) + m_{s,b} = \sum_{i \in [n] \setminus \{b\}} r_i a_i + m_{s,b} -  C(a_b).
    \end{align*}
    The new game is a two player, zero-sum stochastic game. Furthermore, the Nash equilibria for the two games are identical.
\end{proposition}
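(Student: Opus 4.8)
The plan is to verify the two defining Nash inequalities directly, exploiting the fact that the patroller's reward is left untouched by the modification. Since $\tilde{R}_{pat} = R_{pat}$, the patroller's expected discounted payoff is literally the same functional in both games, $\tilde{U}_{pat}(\boldsymbol{\Pi}, \boldsymbol{\Xi}) = U_{pat}(\boldsymbol{\Pi}, \boldsymbol{\Xi})$. Consequently the patroller's best-response condition carries over with no work: $\boldsymbol{\Pi}^*$ maximises the patroller's payoff against a fixed $\boldsymbol{\Xi}^*$ in the original game if and only if it does so in the zero-sum game. All the substance of the proof therefore concerns the smugglers' inequality.

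First I would compute the difference between the smugglers' payoffs in the two games. Since $\tilde{R}_{smug}(b, \boldsymbol{a}, s) = R_{smug}(b, \boldsymbol{a}) + m_{s,b}$, linearity of expectation gives
\begin{equation*}
    \tilde{U}_{smug}(\boldsymbol{\Pi}, \boldsymbol{\Xi}) = U_{smug}(\boldsymbol{\Pi}, \boldsymbol{\Xi}) + \mathbb{E}_{\boldsymbol{\Pi}, \boldsymbol{\Xi}, \mathbb{P}_0}\left[ \sum_{t=0}^\infty \gamma^t m_{s_t, b_t} \right],
\end{equation*}
with the series converging absolutely because the finitely many movement costs are bounded and $\gamma < 1$. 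The crux is to argue that this added term does not depend on the smugglers' strategy. The movement cost $m_{s_t, b_t}$ depends only on the current state $s_t$ and the patroller's action $b_t$, never on the smugglers' action $\boldsymbol{a}_t$. Moreover, by the single-controller property the state evolves via $s_{t+1} = b_t$ regardless of the smugglers' actions, so the joint law of the sequence $(s_0, b_0, s_1, b_1, \dots)$ is determined entirely by $\boldsymbol{\Pi}$ and $\mathbb{P}_0$. Hence the bracketed term is a function of $\boldsymbol{\Pi}$ alone, which I would denote $M(\boldsymbol{\Pi})$.

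It follows that, for any fixed patroller strategy $\boldsymbol{\Pi}^*$, the maps $\boldsymbol{\Xi} \mapsto \tilde{U}_{smug}(\boldsymbol{\Pi}^*, \boldsymbol{\Xi})$ and $\boldsymbol{\Xi} \mapsto U_{smug}(\boldsymbol{\Pi}^*, \boldsymbol{\Xi})$ differ only by the additive constant $M(\boldsymbol{\Pi}^*)$, and therefore share the same maximisers. The smugglers' Nash inequality $U_{smug}(\boldsymbol{\Pi}^*, \boldsymbol{\Xi}^*) \geq U_{smug}(\boldsymbol{\Pi}^*, \boldsymbol{\Xi})$ for all $\boldsymbol{\Xi}$ is thus equivalent to $\tilde{U}_{smug}(\boldsymbol{\Pi}^*, \boldsymbol{\Xi}^*) \geq \tilde{U}_{smug}(\boldsymbol{\Pi}^*, \boldsymbol{\Xi})$ for all $\boldsymbol{\Xi}$. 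Combining this with the patroller's condition established above, a strategy pair is a Nash equilibrium in the original game if and only if it is one in the modified game, which is the desired claim. That the modified game is zero-sum is immediate from the defining relation $\tilde{R}_{smug} = -\tilde{R}_{pat}$.

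I expect the main obstacle to be articulating cleanly why the discounted movement-cost term is independent of $\boldsymbol{\Xi}$; this is precisely the point at which the single-controller structure is indispensable, since in a general stochastic game the smugglers' actions would influence the state trajectory and hence the movement costs incurred, breaking the additive-constant decomposition.
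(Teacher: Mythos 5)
Your proof is correct, and while it rests on the same structural fact as the paper's --- the single-controller property puts the movement-cost term beyond the smugglers' influence --- the execution is genuinely different on the smugglers' side. The paper argues at the level of per-period rewards: it invokes the claim that, because the patroller alone controls transitions, the smugglers effectively maximise instantaneous reward (a forward reference to Corollary 1, which is only established in Section 4), and then observes that the per-period difference between $R_{smug}(b,\boldsymbol{a})$ and $\tilde{R}_{smug}(b,\boldsymbol{a},s)$ is $m_{s,b}$, independent of $\boldsymbol{a}$. You instead work directly with total discounted payoffs, writing $\tilde{U}_{smug}(\boldsymbol{\Pi},\boldsymbol{\Xi}) = U_{smug}(\boldsymbol{\Pi},\boldsymbol{\Xi}) + M(\boldsymbol{\Pi})$ where $M(\boldsymbol{\Pi})$ is the expected discounted movement cost, which depends only on $\boldsymbol{\Pi}$ and $\mathbb{P}_0$ because the law of the trajectory $(s_0,b_0,s_1,b_1,\dots)$ is untouched by $\boldsymbol{\Xi}$; the two objectives then share maximisers in $\boldsymbol{\Xi}$ as they differ by an additive constant. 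Your route buys self-containedness and rigour: it needs no best-response characterisation, only linearity of expectation (justified by boundedness of the costs and $\gamma < 1$) and the deterministic transition rule $s_{t+1}=b_t$, and it avoids the paper's circular-looking dependence on a later result. What the paper's phrasing buys is brevity, since Corollary 1 is developed anyway for the algorithmic sections; but as a standalone proof of Proposition 2, yours is the cleaner argument. The patroller half and the zero-sum observation are identical in both.
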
    
Please note that we shall consider this altered form of the game with reward functions $\tilde{R}$ for the remainder of the paper. Proposition 2 is important since it allows us to apply a number of algorithms to find Nash equilibria which require that the game be zero-sum. Examples of these include algorithms for finite two-player zero-sum stochastic games developed by \cite{Shapley1953} and those developed for single controller games by \cite{Raghavan2003}. A further consequence of Proposition 2 is that in our model there must indeed be a Nash equilibrium with stationary strategies. This follows from the result of \cite{Maitra1970}, since in our game the reward to either player is continuous in the actions of both players and the state transition is deterministic.

However, the assumption of finite action spaces, made by \cite{Shapley1953} and \cite{Raghavan2003} does not hold in our model. The smugglers can take any action from the $n$-dimensional unit cube. Determining Nash equilibria remains a major challenge. This is the subject of Section 4.
\section{Finding Nash Equilibria}
There exist in the literature algorithms that can calculate Nash equilibria in two-player zero-sum stochastic games such as the one by \cite{Shapley1953}. However, their assumption that the game is finite means that they are not directly applicable here. In this section, we present a method for determining Nash equilibria in our game. We begin by defining the value of a state $s$ for the players.
\begin{definition}
The value of a state $s$ to the patroller, $\boldsymbol{V}_{pat}(s)$, in the stochastic game is the expected reward to the patroller in a Nash equilibrium $(\boldsymbol{\Pi}^*, \boldsymbol{\Xi}^*)$, given that the system starts in the state $s$, namely
\begin{equation*}
    \boldsymbol{V}_{pat}(s) = \mathbb{E}_{\boldsymbol{\Pi}^*, \boldsymbol{\Xi}^*} \left[ \sum_{t = 0}^\infty \gamma^t \tilde{R}_{pat}(b_t, \boldsymbol{a}_t, s_t) \; \middle| \; s_0 = s \right]. 
\end{equation*} 
The value of a state $s$ for the smugglers, $\boldsymbol{V}_{smug}(s)$, is defined similarly.
\end{definition}
The value of each state is unique and can be seen to solve the system of equations,
\begin{equation}
    \boldsymbol{V}_{pat}(s) = \max_{\boldsymbol{\pi} \in \Delta([n])} \min_{\boldsymbol{a} \in [0,1]^n} \left[ \sum_{b = 1}^n \pi_b \left\{ \tilde{R}_{pat}(b,\boldsymbol{a},s) + \gamma \boldsymbol{V}_{pat}(b) \right\} \right] \label{lp}
\end{equation}
where $\pi_b$ is the probability the patroller takes action $b$. This follows from \cite{Shapley1953} and \cite{Maitra1970}. By (\ref{transition}), the transitions of system state are determined entirely by the patroller's choice of action. This is why in (\ref{lp}) we can deterministically know the system state resulting from any patroller action.

\cite{Shapley1953} proved that given any initial starting values $\{\boldsymbol{V}_{pat}^0(s) \; | \; s \in \mathcal{S}\}$ the sequence $\{\boldsymbol{V}_{pat}^k(s) \; | \; s \in \mathcal{S} \}_{k=1}^\infty$, determined by the recursion
\begin{equation*}
\boldsymbol{V}_{pat}^k(s) = \max_{\boldsymbol{\pi} \in \Delta([n])} \min_{\boldsymbol{a} \in [0,1]^n} \left[ \sum_{b = 1}^n \pi_b \left\{ \tilde{R}_{pat}(b,\boldsymbol{a},s) + \gamma \boldsymbol{V}_{pat}^{k-1}(b) \right\} \right] 
\end{equation*}
converges to $\{\boldsymbol{V}_{pat}(s) \; | \; s \in \mathcal{S} \}$ as $k \to \infty$. When state and action spaces are finite, state values may be obtained by using linear programming to solve the maximisation problem within Shapley's iteration. However, since we assume that the action space of the smugglers is infinite, this approach is not available to us. Therefore, we look elsewhere to solve (\ref{lp}).

We begin by establishing properties about the smugglers' best response against any patroller strategy. If the smugglers take a best response against patroller strategy $\boldsymbol{\pi} = (\pi_1, \dots, \pi_n)$ when the system state is $s$, the patroller receives a payoff which we shall denote as $G(\boldsymbol{\pi}, s, \boldsymbol{V}_{pat})$
\begin{equation*}
G(\boldsymbol{\pi}, s, \boldsymbol{V}_{pat}) = \min_{ \boldsymbol{a} \in [0,1]^n } \left\{ \sum_{b = 1}^n \pi_b [ \tilde{R}_{pat}(b,\boldsymbol{a},s) + \gamma \boldsymbol{V}_{pat}(b) ] \right\} .
\end{equation*}
If $\boldsymbol{V}_{pat}$ is the value function for the patroller, it will solve the following system of equations 
by (\ref{lp}).
\begin{equation*}
    \boldsymbol{V}_{pat}(s) = \max_{\boldsymbol{\pi} \in \Delta([n])} G(\boldsymbol{\pi}, s, \boldsymbol{V}_{pat}) \text{ for all } s \in [n].
\end{equation*}
\begin{proposition}
We can rewrite $G(\boldsymbol{\pi}, s, \boldsymbol{V}_{pat})$ as:
\begin{equation}
    G(\boldsymbol{\pi}, s, \boldsymbol{V}_{pat}) = \sum_{b=1}^n \left[ - \max_{a_b \in [0,1]} \left\{ (1 - \pi_b) r_b a_b - \pi_b C(a_b) \right\} + \pi_b (\gamma \boldsymbol{V}_{pat}(b) - m_{s,b}) \right] \label{g3}
    \end{equation} 
\end{proposition}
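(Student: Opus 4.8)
The plan is to reduce the $n$-dimensional minimisation defining $G$ to $n$ independent one-dimensional maximisations by exploiting the separable structure of the reward. First I would substitute $\tilde{R}_{pat} = R_{pat}$ from the model definition, so that the bracketed summand becomes $\pi_b\big[C(a_b) - \sum_{i \neq b} r_i a_i - m_{s,b} + \gamma \boldsymbol{V}_{pat}(b)\big]$, and then split the sum over $b$ into three pieces: a term $\sum_b \pi_b C(a_b)$, the coupling term $\sum_b \pi_b \sum_{i\neq b} r_i a_i$, and the term $\sum_b \pi_b(\gamma \boldsymbol{V}_{pat}(b) - m_{s,b})$ that does not depend on $\boldsymbol{a}$ at all.

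The key algebraic step is to interchange the order of summation in the coupling term. For a fixed location $i$, the quantity $r_i a_i$ is collected whenever the patroller does not defend $i$, i.e. with weight $\pi_b$ for every $b \neq i$, so
\[
\sum_{b=1}^n \pi_b \sum_{i \neq b} r_i a_i = \sum_{i=1}^n r_i a_i \sum_{b \neq i} \pi_b = \sum_{i=1}^n (1-\pi_i) r_i a_i,
\]
where the last equality uses $\sum_b \pi_b = 1$ because $\boldsymbol{\pi} \in \Delta([n])$. After relabelling the dummy index, the $\boldsymbol{a}$-dependent part of the objective becomes $\sum_{b=1}^n \big[\pi_b C(a_b) - (1-\pi_b) r_b a_b\big]$.

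Now comes the decisive observation: in this expression the coordinate $a_b$ appears only inside the $b$-th summand, so the objective is separable across the $n$ coordinates of $\boldsymbol{a}$. Since $[0,1]^n = \prod_{b}[0,1]$ is a product of intervals with no joint constraints linking the $a_b$, the minimisation over the cube decomposes into $n$ independent scalar minimisations, $\min_{\boldsymbol{a}\in[0,1]^n} \sum_b f_b(a_b) = \sum_b \min_{a_b \in [0,1]} f_b(a_b)$, and each scalar minimum can be rewritten as $-\max_{a_b \in [0,1]}\{(1-\pi_b) r_b a_b - \pi_b C(a_b)\}$. Carrying the $\boldsymbol{a}$-independent term $\pi_b(\gamma \boldsymbol{V}_{pat}(b) - m_{s,b})$ into the same sum then yields exactly (\ref{g3}).

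I expect the only point needing care to be the justification that the minimum over the product set equals the sum of the coordinatewise minima; this is immediate once separability is established, but it is the step that genuinely uses both the additive structure of the reward and the Cartesian-product form of the feasible set. Existence of the maxima is guaranteed by continuity of $C$ on the compact interval $[0,1]$, so writing $\max$ rather than $\sup$ is legitimate. Everything else is routine bookkeeping.
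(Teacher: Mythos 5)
Your proposal is correct and follows essentially the same route as the paper's own proof: expand $\tilde{R}_{pat}$, use $\sum_b \pi_b = 1$ to collapse the coupling term $\sum_b \pi_b \sum_{i\neq b} r_i a_i$ into $\sum_b (1-\pi_b) r_b a_b$, then exploit coordinatewise separability to exchange the minimisation over $[0,1]^n$ with the sum and flip each $\min$ into a $-\max$. Your version is, if anything, slightly more careful than the paper's, since you explicitly justify the interchange of $\min$ and $\sum$ and the attainment of the maxima.
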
 
The set of best responses for the smugglers against patroller strategy when the system state is $s$ is given by,
\begin{equation*}
\boldsymbol{a}(\boldsymbol{\pi}, s) = \argmin_{ \boldsymbol{a} \in [0,1]^n } \left\{ \sum_{b = 1}^n \pi_b [ \tilde{R}_{pat}(b,\boldsymbol{a},s) + \gamma \boldsymbol{V}_{pat}(b) ] \right\} .
\end{equation*}
which we can simplify as a consequence of Proposition 3.
\begin{corollary}
The set of best responses $\boldsymbol{a}(\boldsymbol{\pi}, s)$ can be rewritten as follows:
\begin{equation*}
\boldsymbol{a}(\boldsymbol{\pi}, s) = \argmin_{ \boldsymbol{a} \in [0,1]^n } \left\{ \sum_{b = 1}^n \pi_b [ \tilde{R}_{pat}(b,\boldsymbol{a},s) + \gamma \boldsymbol{V}_{pat}(b) ] \right\} = \left( a_1(\pi_1, s), \dots, a_n(\pi_n, s) \right).
\end{equation*}
where
\begin{equation*}
    a_i (\pi_i, s) = \argmax_{a \in [0,1]} \left\{ (1 - \pi_i)r_i a - \pi_i C(a) \right\}
\end{equation*}
\end{corollary}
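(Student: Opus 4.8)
The plan is to exploit the additive separability of the smugglers' objective that is already exposed in the derivation of Proposition 3. Writing out the minimand explicitly via $\tilde{R}_{pat}(b,\boldsymbol{a},s) = C(a_b) - \sum_{i \neq b} r_i a_i - m_{s,b}$, I would first reindex the double sum: collecting the coefficient of each shipment $a_i$ across $\sum_{b} \pi_b \sum_{i \neq b} r_i a_i$ and using $\sum_b \pi_b = 1$ rewrites this as $\sum_i (1-\pi_i) r_i a_i$. After this step the objective reads
\begin{equation*}
\sum_{b=1}^n \pi_b\!\left[\tilde{R}_{pat}(b,\boldsymbol{a},s) + \gamma \boldsymbol{V}_{pat}(b)\right] = \sum_{i=1}^n \left[\pi_i C(a_i) - (1-\pi_i) r_i a_i\right] + \sum_{b=1}^n \pi_b\!\left(\gamma \boldsymbol{V}_{pat}(b) - m_{s,b}\right),
\end{equation*}
which is precisely the rearrangement underlying Proposition 3.

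Second, I would note that the final sum is independent of $\boldsymbol{a}$ and hence irrelevant to the argmin, while the remaining sum is separable: the variable $a_i$ appears only in the single summand $\pi_i C(a_i) - (1-\pi_i) r_i a_i$. The key step is then the standard fact that the minimiser of a separable function over the product domain $[0,1]^n$ is the Cartesian product of the componentwise minimisers — any joint $\boldsymbol{a}$ can be improved one coordinate at a time, so $\boldsymbol{a}$ minimises the sum if and only if each $a_i$ minimises its own summand over $[0,1]$.

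Finally I would convert each scalar minimisation into the stated maximisation by negation: minimising $\pi_i C(a) - (1-\pi_i) r_i a$ over $a \in [0,1]$ is equivalent to maximising $(1-\pi_i) r_i a - \pi_i C(a)$, giving $a_i(\pi_i,s) = \argmax_{a \in [0,1]}\{(1-\pi_i) r_i a - \pi_i C(a)\}$. Assembling the coordinates yields $\boldsymbol{a}(\boldsymbol{\pi},s) = (a_1(\pi_1,s),\dots,a_n(\pi_n,s))$, as claimed; I would also remark in passing that each coordinate depends on $s$ only through the fixed index, not on the state itself.

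The step requiring the most care is the separability argument, namely justifying that the argmin over $[0,1]^n$ decomposes exactly into the product of scalar argmins. This relies on each scalar maximiser being attained, which follows because $[0,1]$ is compact and the reward is continuous in the smugglers' actions — a property already invoked for the existence result cited after Proposition 2, so continuity of $C$ is available to us here. Everything else is bookkeeping already carried out inside Proposition 3.
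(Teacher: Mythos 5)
Your proposal is correct and takes essentially the same route as the paper: the reindexing of the double sum via $\sum_b \pi_b = 1$, the additive separability in the $a_i$, and the min-to-max negation are exactly the content of Proposition 3, whose argmin the paper's one-line proof of the corollary then simply takes componentwise. Your extra care about attainment of the scalar maxima (compactness of $[0,1]$ plus the continuity of $C$ implicitly assumed via the citation of Maitra) is a fine point the paper leaves unstated.
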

\begin{proof}[Proof of Corollary 1]
Follows from taking the argument of the minima in Proposition 3.
\end{proof}
From Corollary 1 we see the smugglers' best response to the patroller is a myopic one, and does not depend on the value function of either player, the discount rate $\gamma$ or the system state $s$.

The function $G$ is additively separable with respect to $\boldsymbol{\pi}$, and so we can write
\begin{equation*}
    G(\boldsymbol{\pi},s,\boldsymbol{V}_{pat}) = \sum_{b=1}^n g_b(\pi_b, s, \boldsymbol{V}_{pat})
\end{equation*}
where
\begin{equation*}
    g_b(\pi_b, s, \boldsymbol{V}_{pat}) = - \max_{a \in [0,1]} \left\{ (1 - \pi_b) r_b a - \pi_b C(a) \right\} + \pi_b (\gamma \boldsymbol{V}_{pat}(b) - m_{s,b})
\end{equation*}
We now develop properties of the functions $g_b$, $b \in [n]$. These will be deployed to develop effecient approaches to the maximisation of $G$.
\begin{lemma} 
For every $b \in [n]$, the function $g_b(\cdot, s, \boldsymbol{V}_{pat}) : \mathbb{R} \to \mathbb{R}$ is concave and Lipschitz continuous on the domain $[0,1]$ with Lipschitz constant $r_b + C(1) - (\gamma \boldsymbol{V}_{pat}(b) - m_{s,b})$ for a fixed system state $s$ and value function $\boldsymbol{V}_{pat}$.
\end{lemma}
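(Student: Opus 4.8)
The plan is to split $g_b$ into an affine part and an envelope (pointwise-maximum) part, treat concavity and the Lipschitz bound separately for each, and recombine by the triangle inequality. Write
\[
g_b(\pi_b, s, \boldsymbol{V}_{pat}) = -h_b(\pi_b) + \pi_b \bigl( \gamma \boldsymbol{V}_{pat}(b) - m_{s,b} \bigr), \qquad h_b(\pi_b) := \max_{a \in [0,1]} \bigl\{ (1 - \pi_b) r_b a - \pi_b C(a) \bigr\} .
\]
The second summand is affine in $\pi_b$, hence concave and Lipschitz with constant $\lvert \gamma \boldsymbol{V}_{pat}(b) - m_{s,b} \rvert$, so all the real content concerns the envelope $h_b$.

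For concavity I would observe that for each fixed $a \in [0,1]$ the map $\pi_b \mapsto (1-\pi_b) r_b a - \pi_b C(a) = r_b a - \pi_b \bigl( r_b a + C(a) \bigr)$ is affine in $\pi_b$. A pointwise maximum of affine functions is convex, so $h_b$ is convex, $-h_b$ is concave, and adding the affine term keeps $g_b$ concave on $[0,1]$.

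For the Lipschitz bound I would exploit the same structure: each affine piece has slope $-\bigl( r_b a + C(a) \bigr)$, and since $C$ is increasing with $C(0)=0$ we have $0 \le r_b a + C(a) \le r_b + C(1)$ for every $a \in [0,1]$. To avoid appealing to differentiability of $C$ or uniqueness of the maximiser, I would argue directly: for $\pi_b > \pi_b'$ with respective maximisers $a^*$ and $\tilde{a}$, inserting $\tilde{a}$ into the $\pi_b$ objective and $a^*$ into the $\pi_b'$ objective yields the two-sided estimate
\[
-(\pi_b - \pi_b') \bigl( r_b \tilde{a} + C(\tilde{a}) \bigr) \le h_b(\pi_b) - h_b(\pi_b') \le -(\pi_b - \pi_b') \bigl( r_b a^* + C(a^*) \bigr),
\]
and bounding $r_b a + C(a) \le r_b + C(1)$ then gives $\lvert h_b(\pi_b) - h_b(\pi_b') \rvert \le (r_b + C(1)) \lvert \pi_b - \pi_b' \rvert$. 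Combining with the affine term by the triangle inequality gives a Lipschitz constant $r_b + C(1) + \lvert \gamma \boldsymbol{V}_{pat}(b) - m_{s,b} \rvert$; since $\boldsymbol{V}_{pat}(b) \le 0$ (a quick consequence of the Bellman equation, as the smugglers can always send nothing while $m_{s,b} \ge 0$, $C(0)=0$ and $\gamma < 1$ force $\max_s \boldsymbol{V}_{pat}(s) \le \gamma \max_s \boldsymbol{V}_{pat}(s)$) the bracket is nonpositive, so $\lvert \gamma \boldsymbol{V}_{pat}(b) - m_{s,b} \rvert = -(\gamma \boldsymbol{V}_{pat}(b) - m_{s,b})$ and the constant is exactly the stated $r_b + C(1) - (\gamma \boldsymbol{V}_{pat}(b) - m_{s,b})$.

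The main obstacle is the envelope term $h_b$: because the maximiser depends on $\pi_b$ and $C$ is only assumed increasing, one cannot differentiate under the maximum, and a careless argument would secretly require smoothness or a unique maximiser. The two-sided ``swap the maximiser'' inequality above is the robust way around this — an elementary envelope-type estimate that reduces both the concavity and the Lipschitz claim to the single uniform slope bound $r_b a + C(a) \le r_b + C(1)$. The only remaining subtlety is the sign of the affine slope, which is precisely why it is worth recording that $\boldsymbol{V}_{pat} \le 0$.
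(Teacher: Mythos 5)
Your proof is correct, but it takes a genuinely different route from the paper's, particularly for the Lipschitz constant. For concavity the paper cites Danskin's theorem to get convexity of the envelope $h_b$, whereas you use the elementary observation that $h_b$ is a pointwise maximum of functions affine in $\pi_b$ --- same essence, but yours is self-contained. The real divergence is in the Lipschitz bound: the paper exploits the fact that $g_b$ is concave on all of $\mathbb{R}$, so its chord slopes are monotone and the Lipschitz constant on $[0,1]$ is dominated by the slopes just \emph{outside} the interval; there the maximiser is identically $0$ (for $\pi_b>1$) or $1$ (for $\pi_b<0$), making $g_b$ affine with explicitly computable slopes $\gamma \boldsymbol{V}_{pat}(b)-m_{s,b}$ and $r_b+C(1)+\gamma \boldsymbol{V}_{pat}(b)-m_{s,b}$. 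You instead stay inside $[0,1]$ and bound the slope of every affine piece uniformly via $0 \le r_b a + C(a) \le r_b + C(1)$, using the two-sided maximiser-swap inequality and the triangle inequality. The paper's route is shorter once concavity on $\mathbb{R}$ is granted, but it requires evaluating the game objects at ``probabilities'' outside $[0,1]$; your route avoids that extension, avoids Danskin, and is robust to a non-smooth $C$ and non-unique maximisers (and adapts immediately to suprema if the maximum is not attained). Both arguments need the sign fact $\gamma \boldsymbol{V}_{pat}(b) - m_{s,b} \le 0$ to put the constant in the stated form; the paper merely asserts that $\gamma \boldsymbol{V}_{pat}(b)$ is negative, whereas you actually derive $\boldsymbol{V}_{pat} \le 0$ from the Bellman equation --- a small gap in the paper that your version fills.
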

\begin{proof}[Proof of Lemma 2]
We utilise \cite{Danskin1967} to establish the convexity of $\max_{a \in [0,1]} \left\{ (1 - \pi_b) r_b a - \pi_b C(a) \right\}$. The concavity of $g_b$ in $\pi_b$, for a fixed $s$ and $\boldsymbol{V}_{pat}$, is then immediate. Lipschitz continuity then follows since $[0,1]$ is compact. See the Appendix for proof of the Lipschitz constant.
\end{proof}
There is an existing literature to solve maximisation problems with an additively separable, concave objective function. Such problems are known as nonlinear knapsack or resource allocation problems. To approximate the continuous problem (\ref{unscaled}) we develop a scaled discrete problem (\ref{scaled}). The scaling factor is denoted $K \in \mathbb{Z}$.

\noindent\begin{minipage}{.5\linewidth}
\begin{align}
\max & \sum_{b=1}^n g_b(\pi_b, s, \boldsymbol{V}_{pat}) \nonumber \\ 
s.t. & \sum_{b=1}^n \pi_b = 1 \label{unscaled} \\
& \pi_b \in [0,1] \nonumber
\end{align}
\end{minipage}%
\begin{minipage}{.5\linewidth}
\begin{align}
\max & \sum_{b=1}^n g_b(\pi_b, s, \boldsymbol{V}_{pat}) \nonumber \\
s.t. & \sum_{b=1}^n \pi_b = 1 \label{scaled} \\
& \pi_b \in \left\{ \frac{0}{K}, \frac{1}{K}, \dots, \frac{K}{K} \right\} \nonumber
\end{align}
\end{minipage}

We denote by $\boldsymbol{\pi}^*$ the optimal solution for the continuous problem (\ref{unscaled}) and by $\tilde{\boldsymbol{\pi}}_K$ the approximate solution obtained from the discrete scaled problem (\ref{scaled}). Scaling by $K = n/\delta$ for some small $\delta > 0$ gives us the bound that $\| \boldsymbol{\pi}^* - \tilde{\boldsymbol{\pi}}_{n/\delta} \|_{\infty} \leq \delta$ by the proximity result of \cite{Hochbaum1994}. Therefore, since for all $b$ the function $g_b$ is Lipschitz continuous we have that,
\begin{align*}
    | G(\boldsymbol{\pi}^*, s, \boldsymbol{V}_{pat}) - G(\tilde{\boldsymbol{\pi}}_{n/\delta}, s, \boldsymbol{V}_{pat}) | \leq & \sum_{b=1}^n |g_b(\pi_b^*, s, \boldsymbol{V}_{pat}) - g_b(\tilde{\pi}_{n/\delta, b}, s, \boldsymbol{V}_{pat})| \\
    \leq & \sum_{b=1}^n [r_b + C(1) - (\gamma \boldsymbol{V}_{pat}(b) - m_{s,b})] |\pi^*_b - \tilde{\pi}_{n/\delta, b}| \\
    \leq & \delta \sum_{b=1}^n [r_b + C(1) - (\gamma \boldsymbol{V}_{pat}(b) - m_{s,b})] .
\end{align*}
We conclude that $|G(\boldsymbol{\pi}^*, s, \boldsymbol{V}_{pat}) - G(\tilde{\boldsymbol{\pi}}_{n/\delta}, s, \boldsymbol{V}_{pat})| = \mathcal{O}(n\delta)$. The discrete resource allocation problem (\ref{scaled}) can be solved greedily, as shown by \cite{Fox1966}. This yields in Algorithm 1 for its solution.

\begin{algorithm}[]
\SetAlgoLined
\KwInitialise{$\tilde{\boldsymbol{\pi}}_K = (0, \dots, 0), k = 0$}
\While{$k < 1$}{
Let,
\begin{equation*}
    j \in \argmax_{b \in [n]} \left\{ g_b \left(\tilde{\pi}_b + \frac{1}{K}, s, \boldsymbol{V}_{pat}\right) - g_b(\tilde{\pi}_b, s, \boldsymbol{V}_{pat}) \right\}
\end{equation*}
with ties decided by taking the lowest index.\\
$\tilde{\pi}_{K,b} := \tilde{\pi}_{K,b} + \frac{1}{K}$ and $k := k + \frac{1}{K}$ 
}
\KwOutput{$\boldsymbol{\tilde{\pi}}_K$}
\caption{Greedy Procedure by \cite{Fox1966}}
\end{algorithm}

While the complexity of Algorithm 1 is $\mathcal{O}(Kn) = \mathcal{O}(n^2/\delta)$ there exist more computationally efficient algorithms in the literature, such as \cite{Kaplan2019}. This has a computational complexity of $\mathcal{O}(n\log{K}) = \mathcal{O}(n \log (n/\delta))$. In our examples, we consider both the algorithm by \cite{Fox1966} and by \cite{Kaplan2019}. We have found that which is the quicker algorithm in practise can depend on the parameters of the problem.

We now leverage our ability to find a $\delta$-optimal solution to the problem (\ref{unscaled}) in order to find the values of the states in the game via the iterative method of \cite{Shapley1953}. This yields Algorithm 2.

\begin{algorithm}[]
\SetAlgoLined
\KwInput{$\epsilon > 0$ and $\delta > 0$ }
\KwInitialise{$\boldsymbol{V}_{pat}^0(s) = (0,\dots,0)$ and $k = 1$} 
\While{$\max_{s \in \mathcal{S}} \left\{ \left| \boldsymbol{V}_{pat}^{k-1}(s) - \boldsymbol{V}_{pat}^k(s) \right| \right\} > \epsilon$}{
    \For{$s = 1, \dots, n$}{
        Find,
        \begin{align*}
        \boldsymbol{V}_{pat}^k(s) &:= \max_{\boldsymbol{\pi} \in \Delta([n])} \min_{\boldsymbol{a} \in [0,1]^n} \left[ \sum_{b = 1}^n \pi_b \left\{ \tilde{R}_{pat}(b,\boldsymbol{a},s) + \gamma \boldsymbol{V}^{k-1}_{pat}(b) \right\} \right] \\
        &= \max_{\boldsymbol{\pi} \in \Delta([n])} G(\boldsymbol{\pi}, s, \boldsymbol{V}_{pat}^{k-1})
        \end{align*}
        using Algorithm 1 with $K = n/\delta$.
    }
        $k := k + 1$ 
}
\KwOutput{$\boldsymbol{V}_{pat}^k$}
\caption{Calculation of state values}
\end{algorithm}

Once the state values have been calculated, the patroller's strategy $\boldsymbol{\Pi}^*$ can be identified as the value of $\tilde{\boldsymbol{\pi}}$ found in Step 3 of Algorithm 2. However, finding the smugglers' strategy $\boldsymbol{\Xi}^*$ which forms a Nash equilibrium with $\boldsymbol{\Pi}^*$ is a complex task without the addition of further assumptions on the parameters. In the next section we explore the characteristics of Nash equilibria under additional assumptions.

\section{Behaviour of the Smugglers' Best Response}

In this section, we focus on two different assumptions about the cost function $C$, which quantifies the losses of the smuggler when caught by the patroller. When the cost function is concave, we show that in a Nash equilibrium the smugglers only take actions in $\{0,1\}$. This yields a more computationally efficient algorithm than Algorithm 2 in such cases, which is also guaranteed to find the optimal solution $\boldsymbol{\pi}^*$. When $C$ is a strictly convex function, we show that the smugglers' strategy in equilibria takes actions deterministically.

\subsection{Concave Cost Functions}

We first examine the case in which the cost function $C$ is a linear function, and then proceed to the case in which it is strictly concave. Under linearity, we prove that at least one of the actions zero or one lies within the set of best responses for each smuggler. Recall that we can calculate the set of best responses to patroller strategy $\boldsymbol{\pi}$ for the smuggler at a location $b$ by
\begin{equation*}
   a_b(\pi_b, s) = \argmax_{a \in [0,1]} \left\{ (1 - \pi_b)r_b a - \pi_b C(a) \right\} .
\end{equation*}
\begin{proposition}
If $C$ is concave, either $0 \in a_b(\pi_b,s)$ or $1 \in a_b(\pi_b,s)$. Furthermore if there exists $a \in (0,1)$ such that $a \in a_b(\pi_b,s)$, then $C$ must be linear. 
\end{proposition}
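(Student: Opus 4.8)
The plan is to reduce the statement to a one-dimensional convexity argument on the smugglers' objective $f(a) = (1 - \pi_b) r_b a - \pi_b C(a)$, whose set of maximizers over $[0,1]$ is exactly $a_b(\pi_b, s)$. The key structural observation is that $\pi_b \in [0,1]$ (it is a component of a distribution in $\Delta([n])$), so $-\pi_b C$ is a nonnegative multiple of the convex function $-C$ and is therefore convex; adding the linear term $(1 - \pi_b) r_b a$ preserves convexity, so $f$ is convex on $[0,1]$. Since a convex function on a compact interval attains its maximum at an endpoint, this immediately yields $0 \in a_b(\pi_b, s)$ or $1 \in a_b(\pi_b, s)$, settling the first claim.

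For the second claim I would suppose there is an interior maximizer $a^* \in (0,1)$ and show the chord of $f$ is touched there. Let $L$ be the affine function joining $(0, f(0))$ and $(1, f(1))$; convexity gives $f \le L$ on $[0,1]$. Writing $a^* = (1-a^*)\cdot 0 + a^* \cdot 1$ and using that $a^*$ is a global maximizer, I get $f(a^*) \le L(a^*) = (1-a^*) f(0) + a^* f(1) \le (1-a^*) f(a^*) + a^* f(a^*) = f(a^*)$, so every inequality is tight and $f(a^*) = L(a^*)$. I would then set $h = L - f$, which is concave (affine minus convex) and nonnegative (by $f \le L$), with $h(0) = h(1) = h(a^*) = 0$. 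Applying the concavity inequality at $a^*$ against each endpoint, e.g. for $a \in (a^*,1)$ via $a^* = (a^*/a)\, a + (1 - a^*/a)\cdot 0$ giving $0 = h(a^*) \ge (a^*/a) h(a) \ge 0$, forces $h(a) = 0$; the symmetric combination handles $a \in (0,a^*)$. Hence $h \equiv 0$, i.e. $f = L$ is affine on $[0,1]$.

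It then remains to read off that $C$ is linear. Since $f(a) = (1-\pi_b) r_b a - \pi_b C(a)$ is affine and its first term is already linear, $\pi_b C$ must be affine. An interior maximizer forces $\pi_b > 0$: if $\pi_b = 0$ then $f(a) = r_b a$ with $r_b > 0$ is maximized uniquely at $a = 1$, so no interior point can lie in $a_b(\pi_b, s)$. With $\pi_b > 0$ we conclude $C$ is affine, and the standing assumption $C(0) = 0$ then makes $C$ linear. I expect the chord-touching step to be the main obstacle: the cleanest route is precisely the auxiliary nonnegative concave function $h$ with three zeros, which pins $f$ to its chord and thereby converts an interior-maximizer hypothesis into linearity of $C$.
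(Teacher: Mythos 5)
Your proof is correct and follows essentially the same route as the paper's: convexity of the smugglers' objective $f(a) = (1-\pi_b) r_b a - \pi_b C(a)$, maximization of a convex function at an extreme point of $[0,1]$ for the first claim, and an interior-maximizer-forces-degeneracy argument for the second. The only differences are matters of completeness rather than strategy: you prove from scratch (via the auxiliary concave function $h = L - f$ with three zeros) the standard fact that the paper simply invokes --- that a convex function with an interior maximum must be constant (equivalently, affine) on the interval --- and you explicitly dispatch the $\pi_b = 0$ edge case and use $C(0)=0$ to pass from affine to linear, details the paper glosses over.
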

\begin{proof}[Proof of Proposition 4]
The function $(1-\pi_b) r_b a - \pi_b C(a)$ is convex in $a$, since $C$ is concave. A maxima of a convex function on a convex set can always be found at an extreme points of that set, establishing the first result. If a maxima exists in the interior of the set, then the function must be constant on the set. In the case that $(1-\pi_b)r_b a - \pi_b C(a)$ is constant, $C$ must be linear.
\end{proof}
Proposition 4 allows us to simplify the game by reducing the action space of the smugglers.
\begin{corollary}
If the cost function $C$ is concave and $(\boldsymbol{\Pi}^*, \boldsymbol{\Xi}^*)$ is a Nash equilibrium in the border patrol game, then there exists a strategy $\tilde{\boldsymbol{\Xi}} \in (\Delta(\{0,1\}))^n$ such that $(\boldsymbol{\Pi}^*, \tilde{\boldsymbol{\Xi}})$ is a Nash equilibrium.
\end{corollary}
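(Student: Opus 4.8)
The plan is to work throughout in the zero-sum game of Proposition 2 and to exploit the fact that, for a single-controller discounted zero-sum stochastic game, a stationary pair $(\boldsymbol{\Pi}, \boldsymbol{\Xi})$ is a Nash equilibrium precisely when, at every state $s$, the components $(\boldsymbol{\pi}^s, \boldsymbol{\xi}^s)$ form a saddle point of the one-shot zero-sum game with patroller payoff
\begin{equation*}
M_s(b, \boldsymbol{a}) = \tilde{R}_{pat}(b, \boldsymbol{a}, s) + \gamma \boldsymbol{V}_{pat}(b),
\end{equation*}
where $\boldsymbol{V}_{pat}$ is the (unique) value of the game; this reduction follows from Shapley's equation (\ref{lp}) together with the observation from (\ref{transition}) that the continuation term $\gamma \boldsymbol{V}_{pat}(b)$ depends only on the patroller's action. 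It therefore suffices to exhibit, at each state $s$, a smuggler strategy supported on $\{0,1\}$ that forms a saddle point with the $s$-th component $\boldsymbol{\pi}^s$ of the given equilibrium $\boldsymbol{\Pi}^*$.

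First I would record that, since $(\boldsymbol{\Pi}^*, \boldsymbol{\Xi}^*)$ is an equilibrium of the zero-sum game, $\boldsymbol{\Pi}^*$ is an optimal (maximin) patroller strategy, so $\boldsymbol{\pi}^s$ is an optimal maximiser in each local game $M_s$ -- and crucially this holds at every state, not only those on the equilibrium path. The construction of the replacement smuggler strategy is then local: take any optimal (minimax) smuggler strategy for $M_s$ and replace the marginal distribution of each coordinate $a_i$ by the two-point distribution $p_i \delta_1 + (1 - p_i)\delta_0$ on $\{0,1\}$ with the same mean $p_i = \mathbb{E}[a_i]$. Because only the marginals $\mathbb{E}[C(a_b)]$ and $\mathbb{E}[a_i]$, $i \neq b$, enter $\mathbb{E}[M_s(b, \boldsymbol{a})]$, this replacement leaves every undefended term unchanged, while concavity of $C$ with $C(0) = 0$ gives $\mathbb{E}[C(a_b)] \ge C(1)\,\mathbb{E}[a_b]$, so the defended term is only reduced. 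Hence the patroller's payoff against every pure action $b$ does not increase, and the new $\{0,1\}$-supported strategy guarantees the smuggler at least as much as before; since the original was minimax, the new one is minimax as well.

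With both pieces in hand I would finish by interchangeability of saddle points in zero-sum games: $\boldsymbol{\pi}^s$ is an optimal maximiser and the constructed $\tilde{\boldsymbol{\xi}}^s$ is an optimal minimiser of the same local game $M_s$, so $(\boldsymbol{\pi}^s, \tilde{\boldsymbol{\xi}}^s)$ is a saddle point for every $s$. Collecting these across states yields $\tilde{\boldsymbol{\Xi}}$, which randomises only over $\{0,1\}$ at each state and location, and the per-state characterisation then certifies that $(\boldsymbol{\Pi}^*, \tilde{\boldsymbol{\Xi}})$ is a Nash equilibrium. This is the mixed-strategy counterpart of Proposition 4, which already supplies the pure-response version: whenever the smuggler would place mass on an interior action, Proposition 4 forces $C$ to be linear on the relevant range, and then the two-point replacement reproduces both $\mathbb{E}[a_i]$ and $\mathbb{E}[C(a_i)]$ exactly, so nothing is lost.

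The main obstacle is the patroller's half of the equilibrium rather than the smugglers'. Verifying that $\tilde{\boldsymbol{\Xi}}$ remains a best response for the smugglers is routine, since by Corollary 1 their best responses decompose across locations and are unaffected by the substitution; the delicate point is ensuring that no patroller deviation -- including a deviation into a state never visited under $\boldsymbol{\Pi}^*$ -- becomes profitable once the smugglers' strategy is altered. This is why the reduction must be carried out at every state using local optimality, and it is exactly the place where concavity of $C$ (equivalently, the extreme-point structure of Proposition 4) is indispensable: for a strictly convex or otherwise non-concave $C$ the smuggler could strictly prefer an interior action and no $\{0,1\}$-valued strategy would preserve the saddle.
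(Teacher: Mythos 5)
Your approach is sound in its essentials but genuinely different from the paper's. The paper argues via a linearity dichotomy: if $\boldsymbol{\Xi}^*$ puts positive mass on an interior action, Proposition 4 forces $C$ to be linear; for linear $C$ both players' expected payoffs depend only on the means $\mathbb{E}[a_i]$, so the mean-preserving $\{0,1\}$ replacement changes neither player's payoff and both best-response conditions carry over verbatim. You avoid the dichotomy entirely: working in the zero-sum form, you take any minimax strategy of the local game at each state, perform the same mean-preserving replacement, and use concavity with $C(0)=0$ (giving $\mathbb{E}[C(a_b)] \geq C(1)\,\mathbb{E}[a_b]$) to show that the patroller's payoff against every pure action weakly decreases, so the replacement is still minimax; interchangeability of optimal strategies in zero-sum games then pairs it with $\boldsymbol{\pi}^{*s}$. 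Your route buys uniformity (no case split on linear versus strictly concave $C$), a construction of $\tilde{\boldsymbol{\Xi}}$ that never references $\boldsymbol{\Xi}^*$, and a transparent identification of exactly where concavity enters; the paper's route buys elementarity, staying entirely at the level of the two best-response conditions without invoking the value function, minimax existence in the local games, or interchangeability.

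One step as you wrote it is not justified and can fail. You claim that the equilibrium property forces $\boldsymbol{\pi}^{*s}$ to be an optimal maximiser of the local game $M_s$ at \emph{every} state. Since $U_{pat}$ and $U_{smug}$ are defined through the initial distribution $\mathbb{P}_0$, the equilibrium conditions place no constraint on $\boldsymbol{\pi}^{*s}$ at a state $s$ with $\mathbb{P}_0(s)=0$ that is never visited under $\boldsymbol{\Pi}^*$: such a component may be arbitrarily suboptimal in $M_s$ (any deviating patroller strategy replaces it wholesale, and the smugglers never encounter it), so no saddle point involving it need exist, and your per-state ``precisely when'' characterisation holds only in the ``if'' direction. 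Unless you assume $\mathbb{P}_0$ has full support, the repair is to apply interchangeability at the level of the whole game rather than state by state: the equilibrium property alone makes $\boldsymbol{\Pi}^*$ a maximin strategy of the aggregate zero-sum game; your $\{0,1\}$-supported $\tilde{\boldsymbol{\Xi}}$, being per-state minimax with respect to the value function, guarantees the value from every initial state and is therefore an optimal strategy for any $\mathbb{P}_0$; interchangeability of optimal strategies in the aggregate game then yields that $(\boldsymbol{\Pi}^*, \tilde{\boldsymbol{\Xi}})$ is a Nash equilibrium. With that restructuring your argument is complete -- and to be fair, the paper's own proof glosses over the same off-path issue, since its appeal to ``$\boldsymbol{\Xi}^*$ must be a best response to $\boldsymbol{\Pi}^*$'' when invoking Proposition 4 likewise only bites at states visited with positive probability.
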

\begin{proof}[Proof of Corollary 2]
Suppose that the strategy $\boldsymbol{\Xi}^*$ takes an action $\boldsymbol{a}$ where $a_b \in (0,1)$ for some $b \in [n]$ with positive probability. Since $\boldsymbol{\Xi}^*$ must be a best response to $\boldsymbol{\Pi}^*$, Proposition 4 implies that $C$ must be linear.
The patroller's best response to $\boldsymbol{\Xi}^*$ when the system state is $s$ gives a payoff of,
\begin{equation*}
    \max_{b \in [n]} \left\{\mathbb{E} \left[ \tilde{R}_{pat}(b, \boldsymbol{a}, s) + \gamma V_{pat}(b) \right] \right\}
\end{equation*}
where the expectation is taken over $\boldsymbol{a} \sim \boldsymbol{\xi}_s$. However, when $C$ is linear we have that
\begin{align*}
    \mathbb{E} \left[ \tilde{R}_{pat}(b, \boldsymbol{a}, s) + \gamma V_{pat}(b) \right] &= \mathbb{E} \left[ C(a_b) - \sum_{i \in [n] \setminus \{b\}} r_i a_i - m_{s,b} + \gamma V_{pat}(b) \right] \\
    &= c \mathbb{E}[a_b] - \sum_{i \in [n] \setminus \{b\}} r_i \mathbb{E}[a_i] - m_{s,b} + \gamma V_{pat}(b) 
\end{align*}
for some $c > 0$. Therefore, as if the smugglers instead take a strategy over the actions zero and one such that the expected quantity remains constant, then both players receive the same expected payoff. Hence, neither the patroller nor smuggler has incentive to deviate and so is a Nash equilbrium. This concludes the proof. 
\end{proof}
From Corollary 2 we infer that the smuggler action space can be reduced to $\mathcal{A} = \{0,1\}^n$ without loss of generality. Having a finite action space for the smugglers means that the stochastic game is now finite and so Nash equilibria can be found using a linear programming formulation for single controller stochastic games. This is as in \cite{Raghavan2003}. Alternatively, we can use linear programming to maximise (\ref{lp}) in the iterative algorithm by \cite{Shapley1953}. This is as in \cite{Filar1997}. Corollary 2 also means that we can replace a strictly concave cost function with a linear cost function, provided that it takes the same values at the endpoints zero and one.
\begin{corollary}
    If the cost function $C$ is a concave function then the Nash equilibria are equivalent those in a game with identical parameters, but a cost function $\tilde{C}$ defined by $\tilde{C}(a) = C(1)a$.
\end{corollary}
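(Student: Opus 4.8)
The plan is to leverage Corollary 2 to collapse both games onto the finite smuggler action space $\{0,1\}^n$, observe that the two reduced games are literally identical, and then lift the resulting equivalence of Nash equilibria back to the full continuous-action formulations. Since $\tilde{C}(a) = C(1)a$ is linear and therefore concave, with $\tilde{C}(0) = 0 = C(0)$ and $\tilde{C}(1) = C(1)$, Corollary 2 applies to the $\tilde{C}$-game exactly as it does to the $C$-game.

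First I would restrict attention to smuggler strategies in $(\Delta(\{0,1\}))^n$. On this reduced action space the reward functions $\tilde{R}_{pat}$ and $\tilde{R}_{smug}$ evaluate the cost function only at the endpoints $a_b \in \{0,1\}$, where $C$ and $\tilde{C}$ agree by construction; the remaining terms $\sum_{i \neq b} r_i a_i$ and $m_{s,b}$ do not involve the cost function at all. Since the state transitions and the discount rate $\gamma$ are also shared, the two reduced games coincide as finite single-controller stochastic games. Identical games have identical Nash equilibria, so the $C$-game and the $\tilde{C}$-game share precisely the same set of equilibria once the smugglers are confined to $\{0,1\}^n$.

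To complete the argument I would lift this back to the original continuous-action games. By Corollary 2, every Nash equilibrium of the $C$-game can be mapped to one supported on $\{0,1\}^n$, and symmetrically for the $\tilde{C}$-game; the preceding step shows these reduced equilibria agree. It then remains to check that a strategy pair which is an equilibrium in the reduced game remains an equilibrium when the smugglers' action set is re-expanded to $[0,1]^n$. This follows from the best-response characterisation of Corollary 1 together with Proposition 4: against any patroller strategy the smugglers' myopic best response always includes an action in $\{0,1\}$, and an interior action can be a best response only when the relevant cost function is linear, in which case it yields the same payoff. Hence no profitable deviation into the interior exists in either game, while the patroller's best response is unchanged because her payoff depends on the smugglers only through the expected quantities sent, which are identical across the two games.

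The main obstacle is precisely this lifting step: establishing that equivalence on the reduced action space is genuine equivalence of Nash equilibria for the full games rather than merely for the restricted ones. Its technical content is verifying that neither player gains from a deviation outside $\{0,1\}^n$, which is exactly what Proposition 4 and Corollary 1 supply. Once that is in place, the chain of equivalences---$C$-game equilibria, reduced equilibria, and $\tilde{C}$-game equilibria---closes and delivers the claim.
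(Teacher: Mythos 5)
Your proposal is correct and follows essentially the same route as the paper: invoke Corollary 2 to reduce both games to smuggler actions in $\{0,1\}^n$, observe that $C$ and $\tilde{C}$ agree at the endpoints so the reduced games coincide, and conclude the equilibria are equivalent. Your explicit lifting step (via Proposition 4 and Corollary 1, showing no profitable deviation into interior actions exists in either full game) is a point the paper's own proof leaves implicit, so your write-up is if anything slightly more complete than the original.
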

\begin{proof}[Proof of Corollary 3]
    By Corollary 2, we have that the smuggler action space is $\{0,1\}^n$. Therefore, the cost function $C$ is evaluated only at the points $a \in \{0,1\}$. Since $\tilde{C}(0) = C(0)$ and $\tilde{C}(1) = C(1)$, any Nash equilibria in the game with the cost function $C$ must also be Nash equilibria in the game with the cost function $\tilde{C}$.  
\end{proof}
We now look to simplify the function $G$ given by (\ref{g3}).
\begin{lemma}
Assuming that the cost function $C$ is concave, we can write $G$ as
\begin{equation}
    G(\boldsymbol{\pi}, s, \boldsymbol{V}_{pat}) = \sum_{b=1}^n \left\{ [\pi_b(C(1) + r_b) - r_b] \mathbbm{1} \left( \pi_b \leq \frac{r_b}{C(1) + r_b} \right) + \pi_b (\gamma \boldsymbol{V}_{pat}(b) - m_{s,b}) \right\} \label{g2} .
\end{equation}
\end{lemma}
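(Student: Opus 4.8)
The plan is to begin from the additively separable form of $G$ established in Proposition 3, namely equation (\ref{g3}), and to simplify the inner maximisation $\max_{a_b \in [0,1]} \{(1-\pi_b) r_b a_b - \pi_b C(a_b)\}$ for each $b$ under the concavity assumption. Since the term $\pi_b(\gamma \boldsymbol{V}_{pat}(b) - m_{s,b})$ already agrees with the corresponding summand of (\ref{g2}), the entire task reduces to showing, for every $b$, that
\[
- \max_{a_b \in [0,1]} \left\{(1-\pi_b) r_b a_b - \pi_b C(a_b)\right\} = \left[\pi_b(C(1)+r_b) - r_b\right]\,\mathbbm{1}\!\left(\pi_b \leq \frac{r_b}{C(1)+r_b}\right),
\]
after which summation over $b$ yields the claim.

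First I would invoke Proposition 4: because $C$ is concave, the objective $(1-\pi_b) r_b a_b - \pi_b C(a_b)$ is convex in $a_b$, and so its maximum over $[0,1]$ is attained at an endpoint. I would then evaluate the objective at the two endpoints. Using $C(0) = 0$, the value at $a_b = 0$ is $0$, whereas the value at $a_b = 1$ is $(1-\pi_b) r_b - \pi_b C(1) = r_b - \pi_b(C(1)+r_b)$. Hence the maximum equals $\max\{0,\, r_b - \pi_b(C(1)+r_b)\}$.

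Next I would locate the threshold at which the two candidate values coincide. The quantity $r_b - \pi_b(C(1)+r_b)$ is nonnegative exactly when $\pi_b \leq r_b/(C(1)+r_b)$, so the maximum equals $r_b - \pi_b(C(1)+r_b)$ on that range and $0$ otherwise. Negating and rewriting via the indicator produces precisely the per-location expression displayed above; summing over $b$ and reinstating the $\pi_b(\gamma \boldsymbol{V}_{pat}(b) - m_{s,b})$ terms then gives (\ref{g2}).

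I do not expect a substantive obstacle, since the argument is a direct endpoint computation once Proposition 4 has been applied. The only point demanding care is correctly locating the crossover threshold $r_b/(C(1)+r_b)$ and checking consistency at the boundary: at $\pi_b = r_b/(C(1)+r_b)$ both the bracketed term $\pi_b(C(1)+r_b) - r_b$ and the evaluated maximum vanish, so the choice of the weak inequality $\leq$ in the indicator is immaterial and the identity holds throughout.
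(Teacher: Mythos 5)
Your proposal is correct and follows essentially the same route as the paper's proof: invoke Proposition 4 to restrict attention to the endpoints $a_b \in \{0,1\}$, evaluate the objective there (getting $0$ and $r_b - \pi_b(C(1)+r_b)$ respectively), identify the threshold $\pi_b \leq r_b/(C(1)+r_b)$, express the maximum via the indicator, and substitute back into (\ref{g3}). Your write-up is in fact slightly cleaner than the paper's, which contains a small typo in the endpoint evaluation at $a=1$ (writing $-C(1)$ where $-\pi_b C(1)$ is meant) before using the correct expression in the subsequent display.
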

\begin{proof}[Proof of Lemma 3]
    Proposition 4 implies that $0 \in a_b(\pi_b, s)$ or $1 \in a_b(\pi_b, s)$. If we evaluate the smuggler's payoff at the two we get,
    \begin{align*}
        a = 0 &\implies (1-\pi_b) r_b a - \pi_b C(a) = 0 \\
        a = 1 &\implies (1-\pi_b) r_b a - \pi_b C(a) = (1 - \pi_b) r_b - C(1).
    \end{align*}
    This means that,
    \begin{equation*}
       \mathbbm{1} \left( \pi_b^s \leq \frac{r_b}{C(1) + r_b} \right) \in a_b(\pi_b, s)
    \end{equation*}
    and so,
    \begin{equation*}
        \max_{a \in [0,1]} \{ (1 - \pi_b) r_b a - \pi_b C(a) \} = [(1 - \pi_b)r_b - \pi_b C(1)] \mathbbm{1} \left( \pi_b^s \leq \frac{r_b}{C(1) + r_b} \right) 
    \end{equation*}
    Substituting this into the expression for $G$ in Equation (\ref{g3}) gives the result in the statement of the lemma.
\end{proof}
A consequence of Lemma 3 is that we can now provide a computationally efficient method in Algorithm 3 to find the optimal value of $G$ when the cost function $C$ is concave.
\begin{algorithm}[]
\SetAlgoLined
\KwInitialise{$\hat{\boldsymbol{\pi}} = (0, \dots, 0)$}
\While{$\sum_{b=1}^n \pi_b < 1$}{
Define for all $b$,
\begin{equation*}
    x_b = \begin{dcases}
    \frac{r_b}{C(1) + r_b} & \text{ if } \hat{\pi}_b = 0 \\
    1 - \frac{r_b}{C(1) + r_b} & \text{ otherwise.}
    \end{dcases}
\end{equation*} \\
Choose arbitrarily,
\begin{equation*}
j \in \argmax_{b \in [n]} \left\{ \frac{g_b(\hat{\pi}_b + x_b) - g_b(\hat{\pi}_b)}{x_b} \right\}
\end{equation*} \\
with ties decided by taking the lowest index. \\
\eIf{$\sum_{b=1}^n \hat{\pi}_b + x_j \leq 1$}{
    Let $\hat{\pi}^j := \hat{\pi}^j + x_j$ . \\
 }{
    Let $\hat{\pi}^j := \hat{\pi}^j + \left( 1 - \sum_{b=1}^n \hat{\pi}_b \right) . $
}
}
\KwOutput{$\hat{\boldsymbol{\pi}}$}
\caption{Concave Cost Greedy maximization of $G$}
\end{algorithm}

\begin{theorem}
If $C$ is linear then $\hat{\boldsymbol{\pi}}$, the output of Algorithm 3 maximises $G$.
\end{theorem}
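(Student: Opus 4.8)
The plan is to recognise the maximisation of $G$ in (\ref{unscaled}) as a continuous resource-allocation problem over piecewise-linear concave functions, for which a greedy fill by decreasing marginal slope is optimal, and then to verify that Algorithm 3 is precisely this greedy fill.

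First I would record the explicit shape of each summand when $C$ is linear. By Lemma 3, writing $t_b = r_b/(C(1)+r_b)$ and $v_b = \gamma \boldsymbol{V}_{pat}(b) - m_{s,b}$, the function $g_b$ is continuous and piecewise linear with a single kink at $t_b$: it has slope $\alpha_b := C(1) + r_b + v_b$ on $[0,t_b]$ and slope $\beta_b := v_b$ on $[t_b,1]$. Since $C(1)+r_b>0$ we always have $\alpha_b>\beta_b$, so each $g_b$ is concave (consistent with Lemma 2) and its marginal value is strictly larger on its first segment than on its second.

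Next I would reframe (\ref{unscaled}) as distributing a unit of probability mass across $2n$ linear ``segments'', two per location, where charging mass to location $b$'s second segment requires its first segment to be full. The crucial structural observation is that, because $\alpha_b>\beta_b$ for every $b$, any ordering of the $2n$ segments by decreasing slope places the first segment of each location before its second. Hence this precedence is satisfied automatically, and the problem reduces to an ordinary fractional-knapsack fill: sort all segments by slope and allocate mass to them from the highest slope downwards until the budget is exhausted, with at most the last segment partially filled. I would then check that Algorithm 3 implements exactly this fill. The quantity it maximises, $(g_b(\hat{\pi}_b+x_b)-g_b(\hat{\pi}_b))/x_b$, equals the slope of the segment currently available at $b$ --- namely $\alpha_b$ when $\hat{\pi}_b=0$ (so $x_b=t_b$) and $\beta_b$ when $\hat{\pi}_b=t_b$ (so $x_b=1-t_b$) --- and a chosen segment is always filled completely unless the remaining budget is smaller, which by inspection occurs only at the terminating iteration. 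Thus $\hat{\boldsymbol{\pi}}$ fills segments in order of decreasing slope, tie-breaking being immaterial to the objective value.

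Finally I would establish optimality of this greedy fill. I expect the cleanest route is to exhibit the Lagrange multiplier, or ``water level'', for the simplex-constrained concave maximisation: let $\mu^*$ be the slope of the last segment touched by the algorithm. Every segment with slope exceeding $\mu^*$ is full, every segment with slope below $\mu^*$ is empty, and the critical segment is partially filled, so that $\mu^*$ is a supergradient of $g_b$ at $\hat{\pi}_b$ whenever $\hat{\pi}_b\in(0,1)$, while the right-hand slope satisfies $\alpha_b\le\mu^*$ wherever $\hat{\pi}_b=0$ and the left-hand slope satisfies $\beta_b\ge\mu^*$ wherever $\hat{\pi}_b=1$. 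These are exactly the first-order optimality conditions for maximising the separable concave objective subject to $\sum_b\pi_b=1$ and $\pi_b\in[0,1]$, so $\hat{\boldsymbol{\pi}}$ is a global maximiser; equivalently, one could give an exchange argument showing that any feasible $\boldsymbol{\pi}\ne\hat{\boldsymbol{\pi}}$ is improved by shifting mass from a lower-slope to a higher-slope segment. The main obstacle is the bookkeeping in the reduction step: one must argue carefully that the automatic precedence really does let the two-piece structure behave like $2n$ independent segments, and that the terminal partial fill is the only place a segment is left incomplete, so that identifying $\mu^*$ with the stopping slope is legitimate.
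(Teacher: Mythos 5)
Your proposal is correct, but it proves the theorem by a genuinely different route than the paper. The paper never establishes optimality of the greedy fill directly: instead it proves by induction that the output $\hat{\boldsymbol{\pi}}$ of Algorithm 3 coincides, iteration by iteration, with the output $\tilde{\boldsymbol{\pi}}_{K_m}$ of the generic discrete greedy procedure (Algorithm 1) run with the special scalings $K_m = m \prod_{b=1}^n [C(1)+r_b]$, whose grids contain every kink $r_b/(C(1)+r_b)$ --- the key step being exactly your observation that $g_b$ is linear between kinks, so consecutive discrete greedy increments all go to the same location --- and then invokes the proximity bound of Hochbaum, $\| \boldsymbol{\pi}^* - \tilde{\boldsymbol{\pi}}_{K_m}\|_\infty \le n/K_m$, letting $m \to \infty$: since $\hat{\boldsymbol{\pi}}$ does not depend on $m$, it must equal $\boldsymbol{\pi}^*$. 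Your route --- two linear pieces per location with slopes $\alpha_b = C(1)+r_b+v_b > \beta_b = v_b$, Algorithm 3 read as a fractional-knapsack fill in decreasing slope order with the within-location precedence respected automatically, and a water-level/KKT (or exchange) certificate of optimality for the separable concave program --- is self-contained and more elementary: it uses neither Fox's correctness result for Algorithm 1 nor Hochbaum's theorem. It is also slightly more general: for the paper's grid to hit the kinks, the parameters $r_b$ and $C(1)$ effectively need to be integers (or rationals, absorbing denominators into $m$), whereas your first-order argument works for arbitrary positive reals and, via Lemma 3, applies verbatim to any concave $C$. What the paper's argument buys in return is economy of means: it recycles machinery already set up for the general case rather than developing optimality conditions from scratch, at the price of the limiting argument and the implicit divisibility assumption. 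The obstacles you flag (the precedence bookkeeping and the fact that only the terminal segment can be left partially filled) are handled exactly as you suggest --- the maximum available slope is non-increasing over iterations, so full segments have slope at least the stopping level $\mu^*$ and untouched ones at most $\mu^*$ --- so completing your outline is routine.
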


The complexity of Algorithm 3 is only $\mathcal{O}(n)$, since the maximum number of iterations needed to complete is $n+1$ and each iteration has complexity $\mathcal{O}(1)$. We can see that it takes at most $n+1$ iterations, since once the probability of an action is increased twice, the algorithm must terminate.

So far, the discussion has focused only on determining a strategy $\boldsymbol{\Pi}^*$ for the patroller. We now consider how to find a strategy for the smugglers $\boldsymbol{\Xi}^*$ such that ($\boldsymbol{\Pi}^*, \boldsymbol{\Xi}^*)$ is a Nash equilibrium in our model. Once we have found the value function $\boldsymbol{V}_{smug} = - \boldsymbol{V}_{pat}$, finding the smugglers' strategy can be found by taking:
\begin{equation*}
    \boldsymbol{\xi}^s \in \argmax_{\boldsymbol{\xi}^s \in \Delta(\{0,1\}} \min_{b \in [n]} \mathbb{E} \left[ \tilde{R}_{smug}(b, \boldsymbol{a}, s) + \gamma V_{smug}(b)\right]
\end{equation*} 
for each $s \in [n]$. A linear program can efficiently solve this as in \cite{Filar1997}.

\subsection{Strictly Convex Cost Function}

We now proceed to the case in which the cost function $C$ is strictly convex with respect to the action taken by the smugglers. Algorithm 1 can give us an approximation for $\boldsymbol{\Pi}^*$, but as in the previous subsection, we still need to consider how we will calculate the smugglers' strategy $\boldsymbol{\Xi}^*$. The following lemma shows that under an assumption of strict convexity there can only be one choice, and additionally it is simple to find.
\begin{lemma}
If $C$ is strictly convex, then for any given patroller strategy $\boldsymbol{\pi}$, the smugglers have a single best response.
\end{lemma}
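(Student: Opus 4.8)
The plan is to exploit the separability of the smugglers' best response established in Corollary 1, which reduces the problem to showing that each single-variable maximisation has a unique solution. Recall that the set of best responses decomposes as $\boldsymbol{a}(\boldsymbol{\pi}, s) = (a_1(\pi_1, s), \dots, a_n(\pi_n, s))$, where
\begin{equation*}
    a_i(\pi_i, s) = \argmax_{a \in [0,1]} \left\{ (1 - \pi_i) r_i a - \pi_i C(a) \right\}.
\end{equation*}
Since the overall best response is the Cartesian product of these coordinate-wise sets, it suffices to prove that each $a_i(\pi_i, s)$ is a singleton; a product of singletons is itself a single point.

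First I would treat the case $\pi_i > 0$. Writing $h_i(a) = (1 - \pi_i) r_i a - \pi_i C(a)$, the term $(1-\pi_i) r_i a$ is linear in $a$ while the term $-\pi_i C(a)$ is strictly concave, because $C$ is strictly convex and the coefficient $-\pi_i$ is negative. The sum of a linear function and a strictly concave function is strictly concave, so $h_i$ is strictly concave on the compact convex set $[0,1]$. A strictly concave function attains its maximum over a compact convex set at exactly one point, so $a_i(\pi_i, s)$ is a singleton in this case.

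Next I would dispatch the boundary case $\pi_i = 0$, which is the only place where strict concavity can fail. Here $h_i(a) = r_i a$, a strictly increasing linear function since $r_i > 0$, so its unique maximiser on $[0,1]$ is $a = 1$. Thus $a_i(\pi_i, s)$ is again a singleton. Combining the two cases, every coordinate of the best response is uniquely determined for any $\boldsymbol{\pi}$, so the smugglers have a single best response.

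The argument is short, and the main obstacle is merely the degenerate case $\pi_i = 0$, where the ``negative coefficient times a strictly convex function'' reasoning no longer yields strict concavity; the standing assumption $r_i > 0$ is precisely what rescues uniqueness there.
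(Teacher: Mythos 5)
Your proof is correct and takes essentially the same route as the paper's: decompose the best response coordinate-wise via Corollary 1, then argue that each one-dimensional objective $(1-\pi_i) r_i a - \pi_i C(a)$ is strictly concave and hence has a unique maximiser on $[0,1]$. You are in fact slightly more careful than the paper, whose proof asserts strict concavity outright and glosses over the degenerate case $\pi_i = 0$ (where the objective is merely linear); your separate treatment of that case, using $r_i > 0$ to force the unique maximiser $a = 1$, closes this small gap.
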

\begin{proof}[Proof of Lemma 4]
Recall that when the system state is $s$, the set of best responses for the smuggler at the location $b$ to the patroller's strategy $\boldsymbol{\pi}$ is given by,
\begin{equation}
a_b(\pi_b, s) = \max_{a \in [0,1]} \left\{ (1 - \pi_b) r_b a - C(a) \right\} . \label{con}
\end{equation}
If $C$ is strictly convex, then the function to be maximised in (\ref{con}) is strictly concave in $a_b$. A strictly concave function can only have a single maximum in the interval $[0,1]$, and therefore there can only be a single unique best response for the smuggler at $b$ for any given patroller strategy. Applying this reasoning to each system state and every location, we can see that there must be a single strategy for the smugglers which is uniquely the best response to $\boldsymbol{\pi}$.
\end{proof}
From Lemma 4, we can quickly compute a best smuggler response $\boldsymbol{\Xi}$ to the patroller's strategy $\boldsymbol{\Pi}^*$. Since there exists a best response to $\boldsymbol{\Pi}^*$, and since there must exist at least one Nash equilibrium, then $(\boldsymbol{\Pi}^*,\boldsymbol{\Xi})$ must indeed be a Nash equilibrium.
\section{Examples}
In this section, we introduce three different examples and discuss how the analysis from previous sections helps to find Nash equilibria and how to understand them. We then go on to justify the use of a stochastic game model in terms of its benefits for the border patrol problem compared to the use of alternative models.
\subsection{Example 1: Linear Border With Linear Cost Function}
We begin by considering an example with a linear cost function. We compare the time taken to find Nash equilibria using the methods discussed in this paper with existing methods in the literature. We can apply the latter, since by Corollary 2 we know that there exists a Nash equilibrium in which the smugglers' actions are supported by $\{0,1\}^n$.

We consider a cost function of $C(a) = 4a$. The reward to each smuggler for success is just the amount of items they send, so that $r_i = 1$ for every location $i$. We define the movement cost for the patroller to be $m_{i,j} = |i-j|^2$. The number of locations in the border shall be varied to display how the methods scale with the size of the problem. Finally, we consider a fixed discount factor of $\gamma = 0.9$ for each player.

In Table 1, we present the time it takes for five different algorithms to find a Nash equilibrium in the model. The first method is to solve a single linear program using the formulation of \cite{Raghavan2003} for single-controller stochastic games. The other methods use the iterative method of \cite{Shapley1953} in Algorithm 2 with different methods to find the solution to the maximisation problem in Step 3. The first of these deploys a linear program using a formulation by \cite{Filar1997}. Subsequent approaches solve it as a resource allocation problem using the algorithms of \cite{Fox1966} and \cite{Kaplan2019}. The final method reported solves using our method assuming a linear cost function in Algorithm 3. We set the tolerance $\epsilon$ in Algorithm 1 to $10^{-3}$, and the scaling of the resource allocation to $\delta = 0.2$. Note that since $K = n/\delta = 5n$, it is always divisible by $r_b + C(1) = 5$. Therefore, by Theorem 1 the resource allocation problem finds the optimal solution.

Table 1 shows that Algorithm 3 dramatically speeds up the calculation of a Nash equilibrium in our game having a 400\%, 1800\%, 13000\% and 43000\% improvement in each respective example over the next best method. We take the case with $n = 6$ locations and show the patroller's strategy for a Nash equilibrium in Figure 1.

We see in Figure 1(a) an illustration of the result of Lemma 3 and Theorem 1, with the patroller choosing actions with probability in multiples of $0.2 = r_b/(r_b+C(1))$. Similarly in Figure 1(b) we see that the smuggler's best response is to send an item with probability one, with probability zero or an intermediate value if the location is protected with respectively a probability less than, greater than or exactly 0.2. Note that as a consequence of Corollary 3, the results in Example 1 countinue hold if we had a concave cost function $C$ taking the values $C(0) = 0$ and $C(1) = 4$.

\subsection{Example 2: Linear Border With Strictly Convex Cost Function}
We now give an example with a strictly convex cost function and show how this yields a different solution from the previous example. The parameters of the model are identical to those in Example 1 ($r_i = 1$ for all $i$, $m_{i,j} = |i-j|^2$, $\gamma = 0.9$), except now we take $C(a) = 4a^2$. Note that we still have $C(0) = 0$ and $C(1) = 4$ as before, and so in our results demonstrate that the simplifications afforded in Example 1 for the concave case no longer apply.

The strategies obtained in this subsection are not necessarily Nash equilibria, since by using the discretization of the resource allocation problem in (\ref{scaled}) we derive only a $\delta$-optimal solution. We assess the closeness to equilibrium by examining the worst case expected reward to the patroller under their strategy $\boldsymbol{\Pi}$. We calculate the worst case expected reward (WCER) by finding a strategy for the smugglers $\boldsymbol{\Xi}^*$ that is the best response to the patroller's strategy $\boldsymbol{\Pi}$. Assuming a uniform distribution over the inital state of the system, $\mathbb{P}(s_0 = s) = 1/n$, we can calculate the WCER for the patroller as follows.
\begin{align*}
    WCER(\boldsymbol{\Pi}) &= \min_{\boldsymbol{\Xi} \in (\Delta([0,1]^n))^n} \left\{ \frac{1}{n}\sum_{s = 1}^n  \mathbb{E}_{\boldsymbol{\Pi}, \boldsymbol{\Xi}} \left[ \sum_{t = 0}^\infty \gamma^t \tilde{R}_{pat}(b_t, \boldsymbol{a}_t, s_t) \; \middle| \; s_0 = s \right]  \right\} \\
    &= \frac{1}{n}\sum_{s = 1}^n \mathbb{E}_{\boldsymbol{\Pi}, \boldsymbol{\Xi}^*} \left[ \sum_{t = 0}^\infty \gamma^t \tilde{R}_{pat}(b_t, \boldsymbol{a}_t, s_t) \; \middle| \; s_0 = s \right].     
\end{align*}
Note that,
\begin{equation*}
    WCER(\boldsymbol{\Pi}^*) = \frac{1}{n} \sum_{s = 1}^n \boldsymbol{V}_{pat}(s) \geq WCER(\boldsymbol{\Pi}).
\end{equation*}
The two methods that we implement to solve this example are the resource allocation algorithms of \cite{Fox1966} and \cite{Kaplan2019} within the iterative algorithm of \cite{Shapley1953}. The resource allocation problem now becomes more challenging to solve, compared to the linear case, since the smugglers' best response to the patroller is more complex. Therefore, there is no scaling of the continuous problem (\ref{unscaled}) that will give us the optimal solution. Now, the smaller the choice of $\delta$, the better the strategy $\boldsymbol{\Pi}$ computed. In Table 2, we give the time taken and worst case reward for the two algorithms under different choices of scaling. As in Table 1, a tolerance of $\epsilon = 10^{-3}$ was used for Algorithm 2.

In Table 2, we can see that as $\delta$ gets smaller the worst case expected reward improves for the patroller. Tables 3 and 4 give the time taken to compute the strategies shown in Table 2.

In the cases with few locations and low fidelity of scaling, the algorithm by \cite{Fox1966} is quicker than that of \cite{Kaplan2019} but as the problem size grows this is no longer the case. Note how in the six location example changing the scaling factor has a much bigger effect on the worst case expected reward of the strategy than in the fifteen location problem. In Figure 2, we show the strategy calculated for $\delta = 0.04$ and six locations.

We can see that the strategy given in Figure 2(a) is quite different to that in Figure 1. No longer are the probabilities multiples of $r_b/(r_b + C(1))$. The patroller is now less likely to move away from one of the two edges of the border, a key impact that changing the cost function has had on their decision making. In Figure 2(b), there is also a large difference in the strategy displayed compared to Figure 1(b), having a single action in $[0,1]$ taken with probability one by each smuggler. These differences elucidate the importance we ascribe to the modelling of costs in our analysis.
\subsection{Example 3: Perimeter Border With Linear Cost Function}
We now turn our attention to an alternative border structure that is important operationally, namely a circular perimeter of an area. We now define the movement cost as the minimum of the length of the two paths the patroller could take between locations. This yields $m_{i,j} = \min \{ |i-j|, n - |i-j| \}$ for $i, j \in [n]$. We also consider a setup in which rewards for the smugglers are location dependent. In reality, there could be various reasons for this including the difficulty in getting through the border and the value of the items on the other side. Here, we set the rewards $\boldsymbol{r}$ equal to $(3,2,1,1,2,3)$. The remaining parameters of the model those of the first example $( C(a) = 4a,$ $\gamma = 0.9)$. Equilibria in this example are computed as for Example 1.

Figure 3(a) shows how the locations protected most heavily are those with higher smuggler reward, which in this example are locations one and six. Note that since the locations form a circle, the patroller can travel from location one to location six at a cost of one unit. This is one reason why Figure 3 (a) looks different from the patroller strategies in previous examples. We continue with the pattern of Figure 1(a), namely that the patroller protects location $b$ with probability $r_b/(r_b+C(1))$. The other values in Figure 3(a) arise as a result of the probabilities needing to sum to one.
\subsection{Value Of Modelling as a Stochastic Game}
In this subsection, we will evaluate the benefits of using our model over alternative modelling approaches. We consider how the patroller's worst case expected reward would be affected if the game was considered to be normal-form, with no consideration to the state of the system in the next time step. We achieve this by finding the patroller's strategy in a Nash equilibrium when we set the discount rate to $\gamma$ to zero.  

The first case we consider is one in which all movement costs are set to zero. This means that the state of the system no longer has an effect on the rewards to either player. We make this choice since the assumptions in the work of \cite{Pita2009} or \cite{Alpern2011} are similar. However, the patroller will be accumulating costs to travel without knowing of their existence. To make a fairer comparison, we consider a second case where movement costs are considered by the patroller but she still acts myopically. This is equivalent to solving a normal-form game for each state in which the patroller could start. 

In Figure 4, we show the strategies obtained under these two sets of assumptions for the model given in Example 3. Figure 4(a) shows why it is important to consider the geography underlying the model, with the patroller making large moves at a high cost. In Figure 4(b) we overcome this, but the patroller can still be seen making suboptimal moves because she is not accounting for the value of the state to which she moves. In Table 5, we take the six location version of the three examples introduced previously in the section, find the strategies as detailed in the previous paragraph and calculate the worst case expected reward for the patroller in each case. 

Having no consideration for the states in the game leads to a large decrease in the reward to the patroller. In Example 1, for example, she incurs over twice the cost than in the full model. Factoring in movement costs but still disregarding future rewards improves the outcome to the patroller, but there is still a very significant benefit to solving with the full stochastic game model.The computational challenge of developing solutions may have been grounds for the earlier focus on over-simple models. Our analysis removes many of these obstacles.

\section{Conclusion}
In this paper, we have introduced a new model which can be used to consider the interaction between smugglers and a patroller on a border. A number of properties of Nash equilibria in the stochastic game are established, and new algorithms to find these equilibria are developed. We provide examples to show empirically that our methods solve the model quicker than existing methods and additionally that using a stochastic game formulation achieves significant improvement for the patroller.

\section{Acknowledgements}
This paper is based on work completed while Matthew Darlington was part of the EPSRC funded STOR-i centre for doctoral training (EP/S022252/1). For the purpose of open access, the author has applied a Creative Commons Attribution (CC BY) licence [where permitted by UKRI, ‘Open Government Licence’ or ‘Creative Commons Attribution No -derivatives (CC BY-ND) licence’ may be stated instead] to any Author Accepted Manuscript version arising

\printbibliography

\section{Tables}

\begin{table}[H]
    \centering
    \caption{Time taken (secs.) to solve Example 1 with different numbers of locations $n$}
    \label{tab:my_label}
\scalebox{.7}{
\begin{tabular}{ccccc}
\toprule
Algorithm Used                                  & $n = 6$         & $n = 9$         & $n = 12$        & $n = 15$        \\ \midrule
Single Controller Linear Program       & \num{0.250}     & \num{2.297}     & \num{25.563}    & \num{766.578}   \\
Shapley method with linear programming         & \num{2.00}      & \num{47.125}    & \num{338.188}   & \num{2812.219}  \\ 
Shapley with resource allocation (Fox 1966)       & \num{19.016}    & \num{52.353}    & \num{78.516}    & \num{120.281}   \\
Shapley with resource allocation (Kaplan et al. 2019)    & \num{19.859}    & \num{50.000}    & \num{103.094}   & \num{132.609}   \\ 
Shapley with Algorithm 3                        & \num{0.063}    & \num{0.125}     & \num{0.203}     & \num{0.281}     \\ \bottomrule
\end{tabular}
}
\end{table}

\begin{table}[H]
    \centering
    \caption{Worst case expected reward in Example 2}
\scalebox{.7}{
    \begin{tabular}{ccccc}
\toprule
                     & $\delta = 1$    & $\delta = 0.2$  & $\delta = 0.1$  & $\delta = 0.04$ \\ \midrule                               
$n = 6$              & -39.068         & -38.338         & -38.291         & -38.282         \\
$n = 9$              & -67.740         & -67.571         & -67.551         & -67.544         \\
$n = 12$             & -97.681         & -97.239         & -97.230         & -97.227         \\
$n = 15$             & -127.200        & -127.060        & -127.052        & -127.049        \\ \bottomrule
\end{tabular}
}
\end{table}
\begin{table}[H]
    \centering
    \caption{Time taken (s) to solve Example 2 (Fox 1966)}
\scalebox{.7}{
\begin{tabular}{ccccc}
\toprule
                     & $\delta = 1$    & $\delta = 0.2$  & $\delta = 0.1$  & $\delta = 0.04$ \\ \midrule  
$n = 6$              & 6.688           & 16.125          & 29.453          & 82.875          \\ 
$n = 9$              & 17.484          & 44.844          & 84.281          & 192.391         \\
$n = 12$             & 31.359          & 91.844          & 159.641         & 337.500         \\
$n = 15$             & 45.375          & 139.984         & 603.641         & 1407.313        \\  \bottomrule
\end{tabular}
}
\end{table}
\begin{table}[H]
    \centering
    \caption{Time taken (s) to solve Example 2 (Kaplan et al. 2019)}
\scalebox{.7}{
    \begin{tabular}{ccccc}
\toprule
                    &  $\delta = 1$     & $\delta = 0.2$   & $\delta = 0.1$  & $\delta = 0.04$   \\ \midrule      
$n = 6$              & 6.641            & 18.656           & 24.563          & 30.719          \\ 
$n = 9$              & 19.672           & 50.469           & 69.016          & 83.156          \\
$n = 12$             & 31.453           & 118.344          & 128.234         & 147.734         \\
$n = 15$             & 49.844           & 143.313          & 498.938         & 591.016         \\ \bottomrule
\end{tabular}
}
\end{table}

\begin{table}[H]
\centering
\caption{Worst Case Expected Rewards Under A Range of Models}
\scalebox{.7}{
\begin{tabular}{cccc}
\toprule
Model                               & Example 1 & Example 2 & Example 3 \\ \midrule
Normal-form Game (without movement cost) & -68.333    & -73.958    & -64.238    \\
Normal-form Game (with movement cost)    & -34.000    & -38.743    & -61.189    \\
Stochastic Game                          & -33.587    & -38.282    & -60.110    \\ \bottomrule
\end{tabular}
}
\end{table}

\newpage
\section{Figures}

\begin{figure}[H]
\centering
\begin{subfigure}{.5\textwidth}
  \centering
  \includegraphics[width=.95\linewidth]{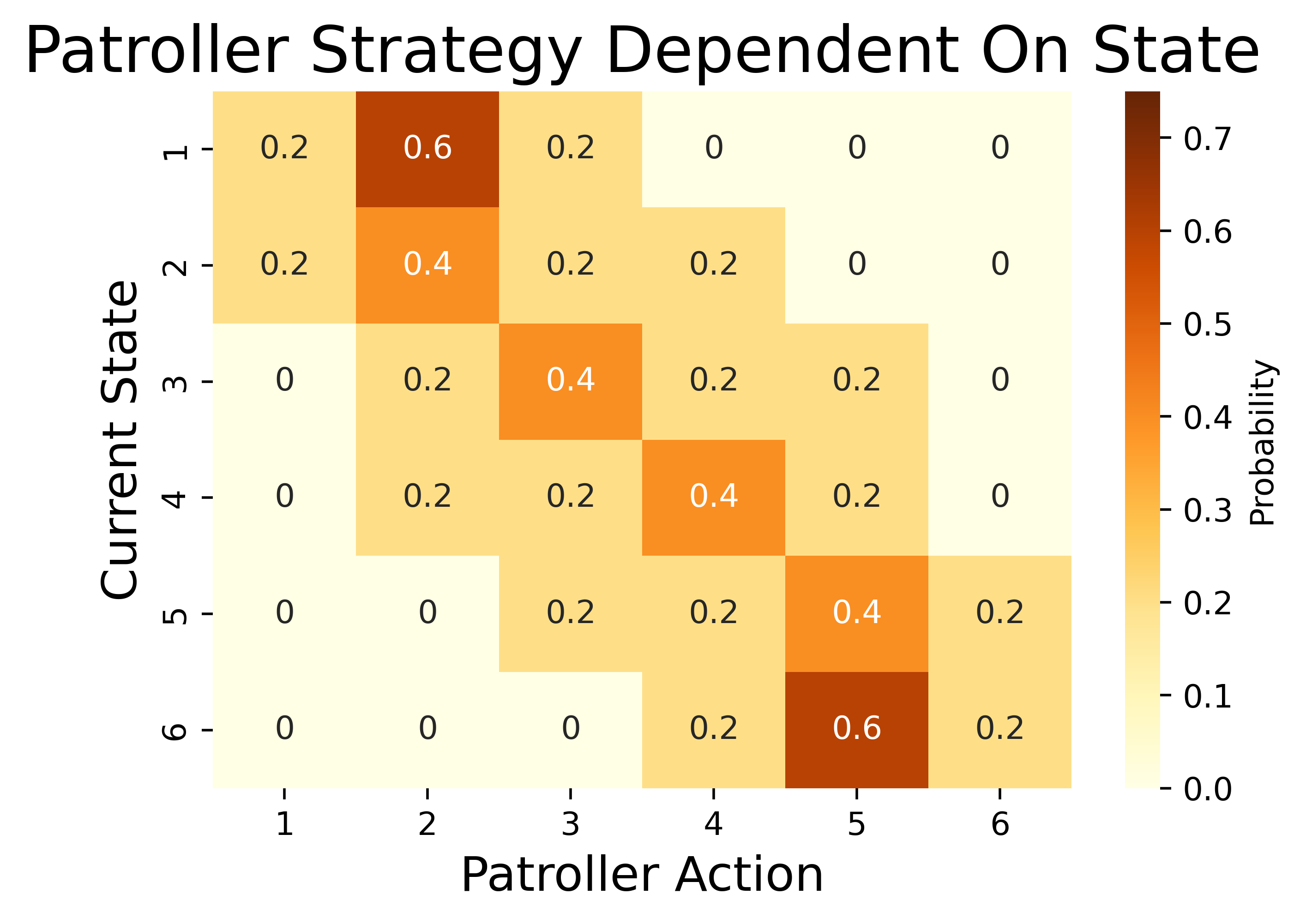}
  \caption{}
\end{subfigure}%
\begin{subfigure}{.5\textwidth}
  \centering
  \includegraphics[width=.95\linewidth]{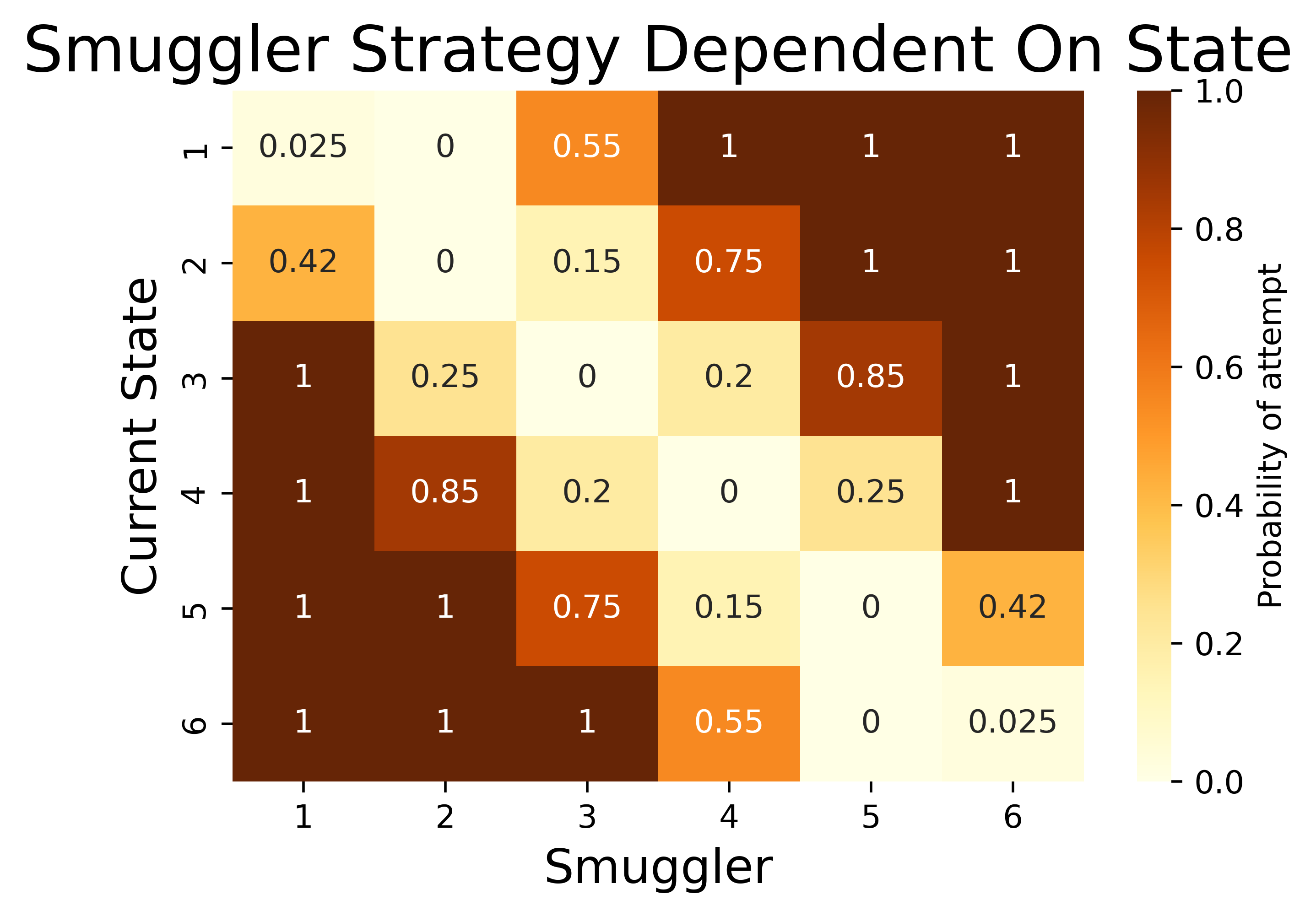}
  \caption{}
\end{subfigure}
\caption{A Nash equilibrium in Example 1. The vertical axis gives the current state $s$ of the system in both figures. In (a) the horizontal axis shows each location the patroller could move to and the colour gives the probability with which they take that action. In (b) the horizontal axis gives each smuggler and the colour gives the probability with which they make an attempt to smuggle an item.}
\end{figure}
\vspace{-8mm}

\begin{figure}[H]
\centering
\begin{subfigure}{.5\textwidth}
  \centering
  \includegraphics[width=.95\linewidth]{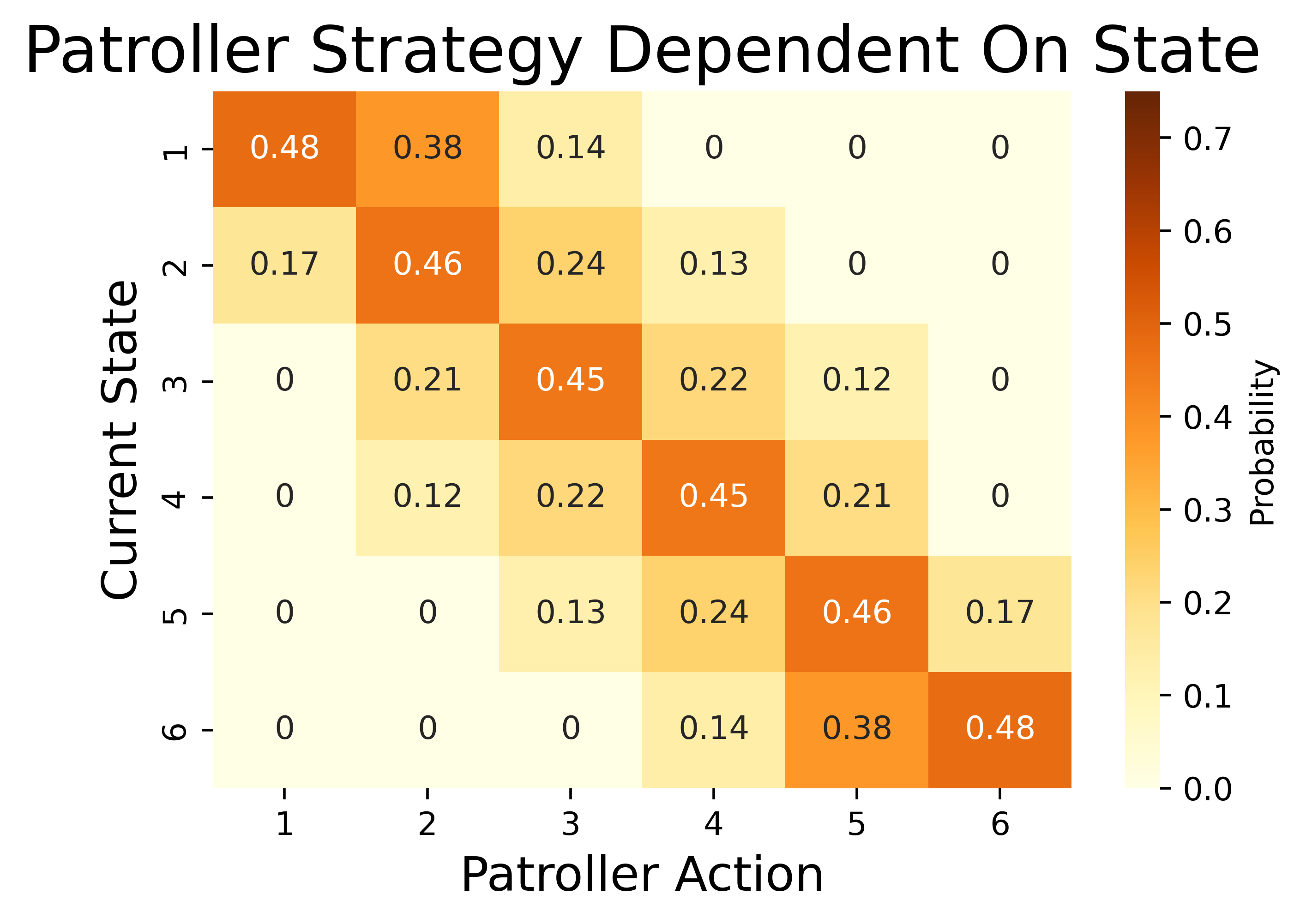}
  \caption{}
\end{subfigure}%
\begin{subfigure}{.5\textwidth}
  \centering
  \includegraphics[width=.95\linewidth]{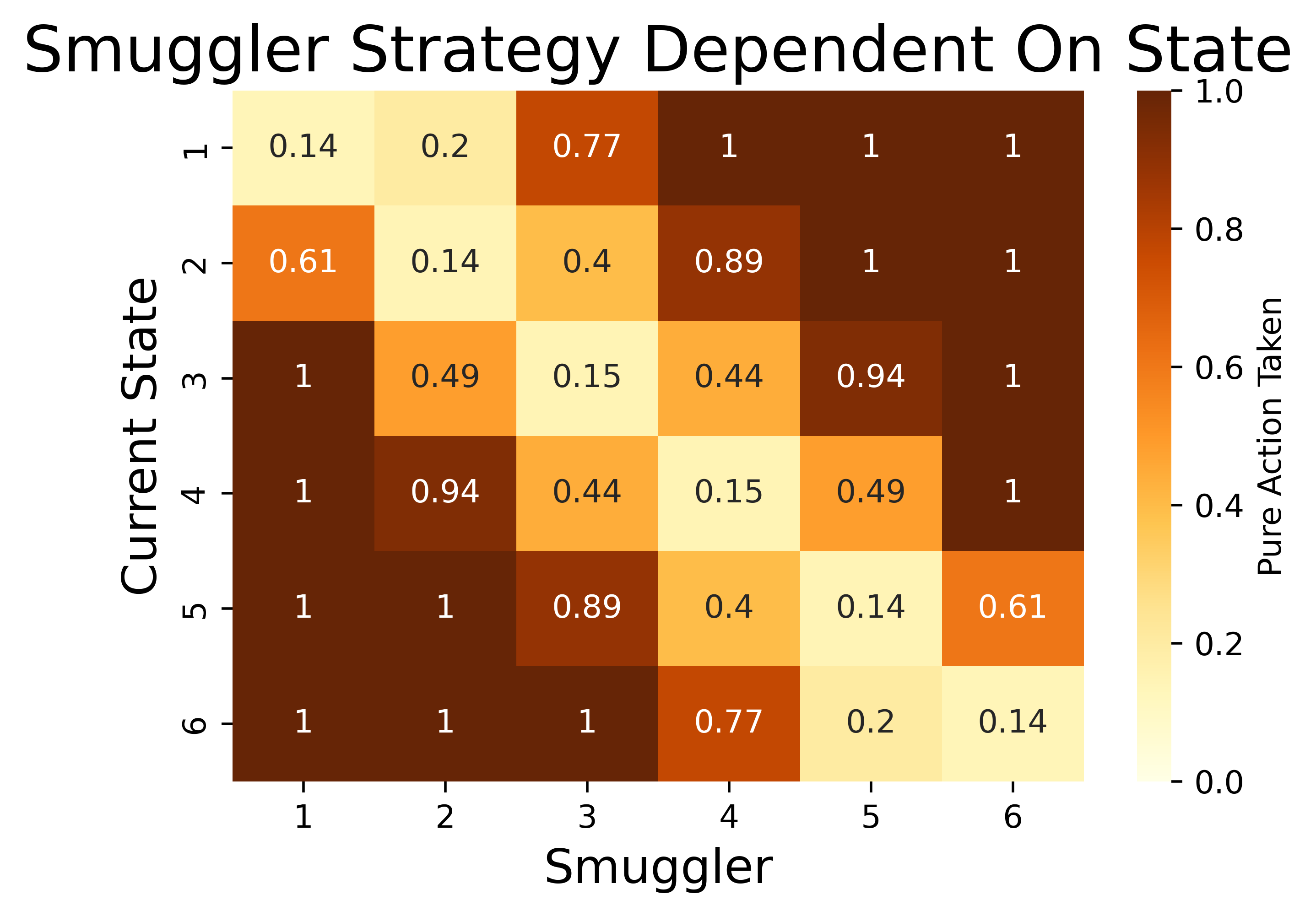}
  \caption{}
\end{subfigure}
\caption*{A Nash equilibrium in Example 2. The vertical axis gives the current state $s$ of the system in both figures. Figure 2(a) has the same interpretation as in Figure 1(a). In (b) the horizontal axis denoted each smuggler and the colour now gives the quantity of items they attempt to smuggle with probability one.}
\end{figure}
\vspace{-8mm}

\begin{figure}[H]
\centering
\begin{subfigure}{.5\textwidth}
  \centering
  \includegraphics[width=.95\linewidth]{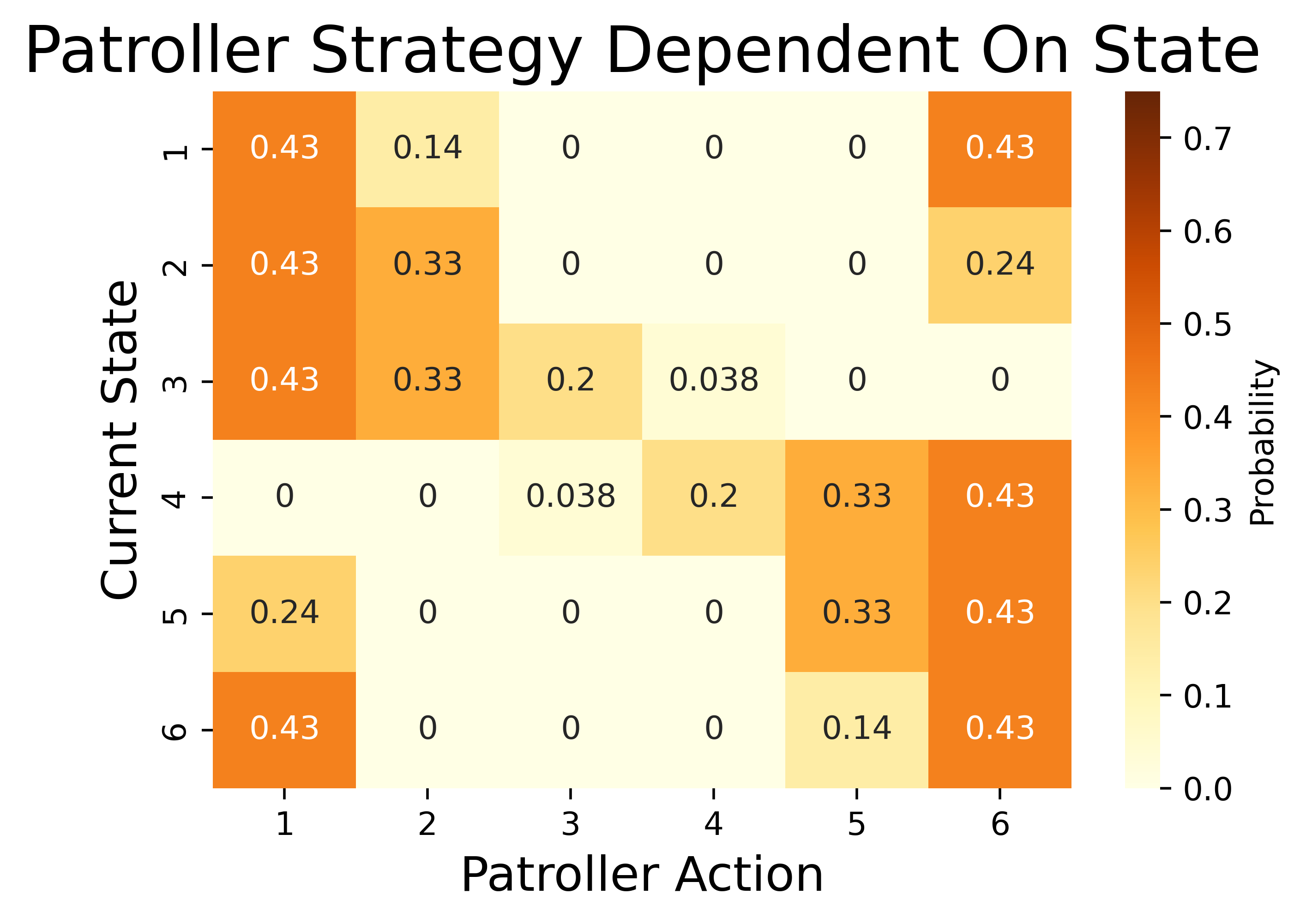}
  \caption{}
\end{subfigure}%
\begin{subfigure}{.5\textwidth}
  \centering
  \includegraphics[width=.95\linewidth]{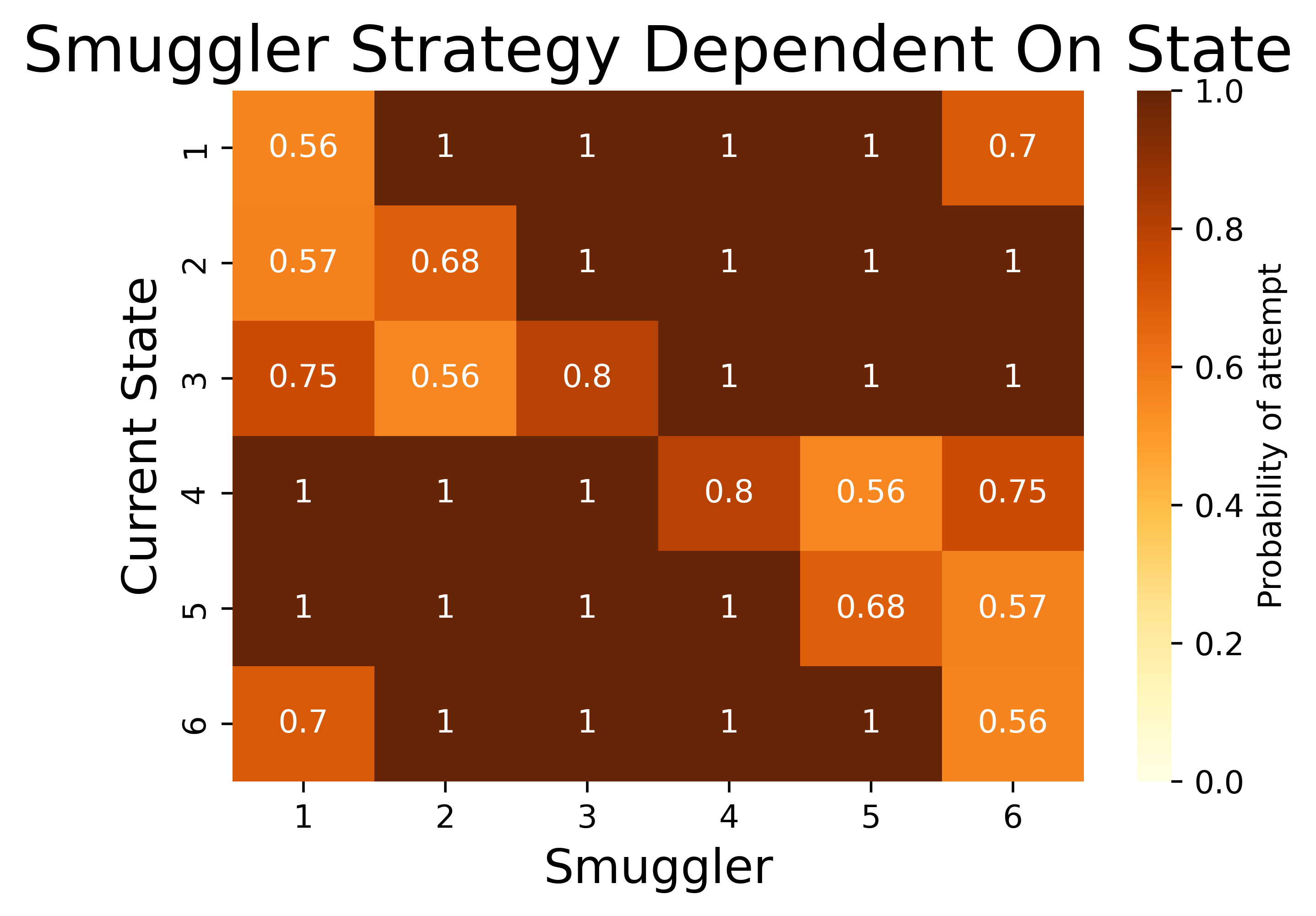}
  \caption{}
\end{subfigure}
\caption*{A Nash equilibrium in Example 3. The figure has the same interpretation as Figure 1.}
\end{figure}
\vspace{-8mm}

\begin{figure}[H]
\centering
\begin{subfigure}{.5\textwidth}
  \centering
  \includegraphics[width=.95\linewidth]{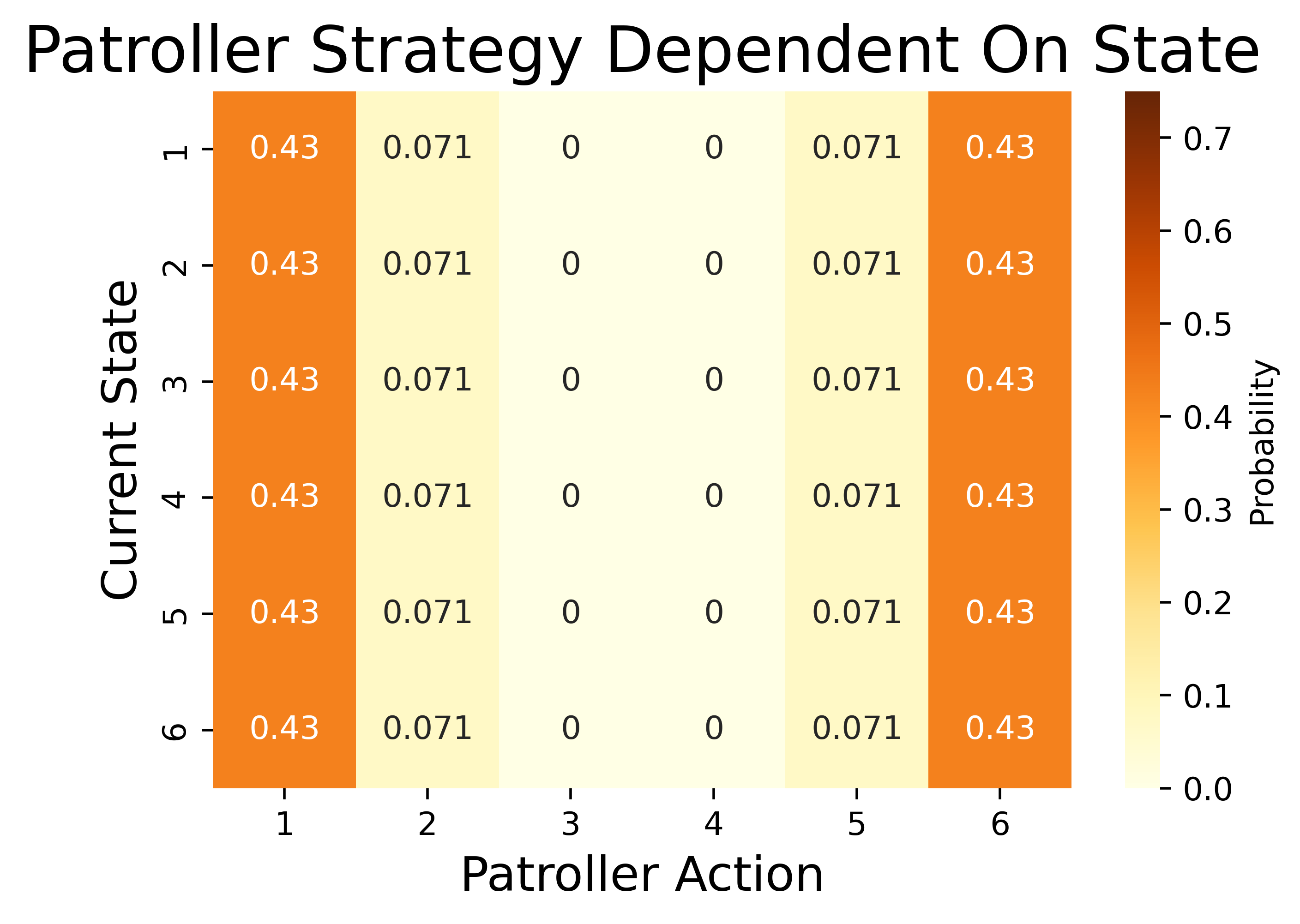}
  \caption{}
\end{subfigure}%
\begin{subfigure}{.5\textwidth}
  \centering
  \includegraphics[width=.95\linewidth]{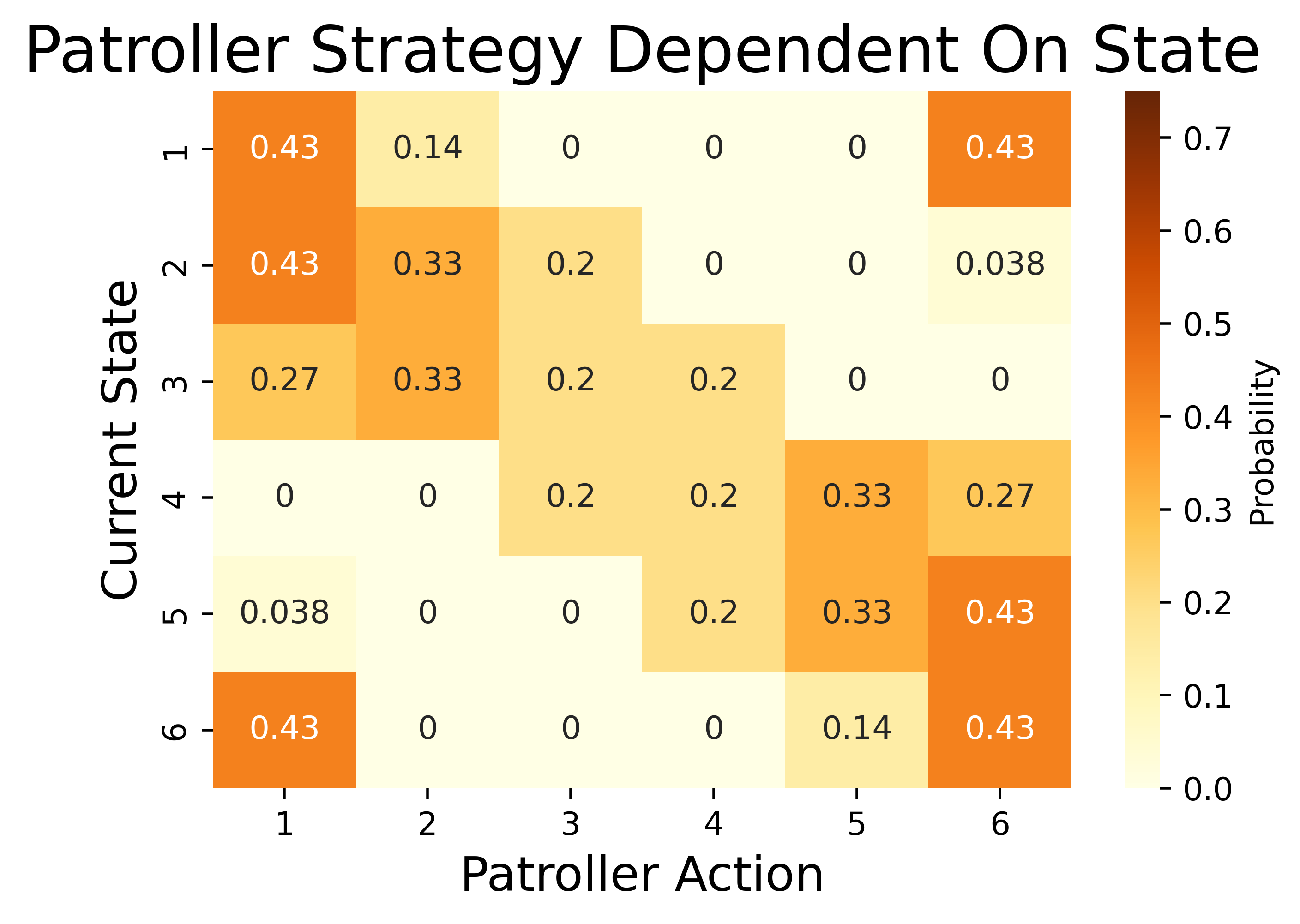}
  \caption{}
\end{subfigure}
\caption*{Patroller's strategies in a Nash equilibrium for the two described models. The figures gave the same interpretation as Figure 1(a).}
\end{figure}
\vspace{-8mm}

\newpage

\section{Appendix}

\subsection{Proof of Proposition 1}
\begin{proof}[Proof]
Assume that all players have a discount rate of $\gamma$ and that we have a Nash equilibrium of $(\boldsymbol{\Pi}^*, \boldsymbol{\Xi}^*)$. We first show that after changing a single smuggler's discount rate we have an unchanged Nash equilibrium. Then, by induction we can apply this to every smuggler in turn to see that with discount rates $(\gamma, \lambda_1, \dots, \lambda_n)$ we still have that $(\boldsymbol{\Pi}^*, \boldsymbol{\Xi}^*)$ is a Nash equilibrium.

Suppose that we alter smuggler $j$'s discount rate to $\lambda_j$. First, we consider each player other than smuggler $j$. Since their discount factor is still the same, their expected reward is still the same, and thus they will not have any incentive to deviate from their strategy. The only player who may have an incentive to change their strategy is the smuggler $j$. However, since the patroller controls the state transitions in the game, the smuggler $j$ can only maximise his instantaneous reward (further details can be seen in Corollary 1). A best response of smuggler $j$ does not depend on their discount factor, and therefore they also do not have an incentive to deviate. 
\end{proof}

\subsection{Proof of Proposition 2}
\begin{proof}[Proof]
    Firstly, we consider whether the patroller has an incentive to deviate from their strategy $\boldsymbol{\Pi}^*$. The patroller's reward is the same in both games under any strategy taken by either player. Therefore, the patroller not having an incentive to deviate in one game implies that they have no incentive to deviate in the other.
    
    Secondly, we explore whether the smuggler has any incentive to deviate from their strategy $\boldsymbol{\Xi}^*$. The patroller is the single controller in the stochastic game, and thus the smugglers can only try to maximize their instantaneous reward (see Corollary 1). The difference between the reward functions for the smugglers in the two games does not depend on their action $\boldsymbol{a}$ since,
    \begin{equation*}
        R_{smug}(b, \boldsymbol{a}) - \tilde{R}_{smug}(b, \boldsymbol{a}, s) = m_{s,b}.
    \end{equation*}
    Therefore, they have no incentive to deviate from their strategy $\boldsymbol{\Xi}^*$ in one game if and only if they have no incentive to deviate from $\boldsymbol{\Xi}^*$ in the other.
    
    Hence the Nash equilibria in the games coincide.
\end{proof}

\subsection{Proof of Proposition 3}
\begin{proof}[Proof]
The expected payoff to the patroller when the smugglers take a best response against them can be written as,
\begin{align}
    \min_{ \boldsymbol{a} \in [0,1]^n } \left\{ \sum_{b = 1}^n \pi_b [ \tilde{R}_{pat}(b,\boldsymbol{a},s) + \gamma \boldsymbol{V}_{pat}(b) ] \right\} = & \min_{ \boldsymbol{a} \in [0,1]^n } \left\{ \sum_{b = 1}^n \pi_b \tilde{R}_{pat}(b,\boldsymbol{a},s) + \gamma \sum_{b = 1}^n \pi_b \boldsymbol{V}_{pat}(b) \right\} \label{m}.
\end{align}
The first sum can be rewritten if we expand upon the equation for the smugglers' reward function as follows,
\begin{align*}
 \sum_{b = 1}^n \pi_b  \tilde{R}_{pat}(b,\boldsymbol{a},s) &= \sum_{b = 1}^n \pi_b  \left[ C(a_b) - \sum_{i \in [n]\setminus\{b\}} r_i a_i - m_{s,b} \right] \\
&=  \sum_{b = 1}^n [\pi_b C(a_b) + (\pi_b - 1)r_b a_b  - \pi_b m_{s,b} ] .
\end{align*}
Thus, if we consider this in Equation (\ref{m}) we derive,
\begin{align*}
  \min_{ \boldsymbol{a} \in [0,1]^n } \left\{ \sum_{b = 1}^n \pi_b [ \tilde{R}_{pat}(b,\boldsymbol{a},s) + \gamma \boldsymbol{V}_{pat}(b) ] \right\} &= \min_{ \boldsymbol{a} \in[0,1]^n } \left\{\sum_{b = 1}^n [\pi_b C(a_b) + (\pi_b - 1)r_b a_b ] - \pi_b m_{s,b} + \gamma \pi_b \boldsymbol{V}_{pat}(b) \right\} \\
  &= \sum_{b = 1}^n \left[ \min_{ a_b \in [0, 1] } \left\{ \pi_b C(a_b) + (\pi_b - 1)r_b a_b  \right\} - \pi_b m_{s,b} + \gamma \pi_b \boldsymbol{V}_{pat}(b) \right] \\
  &= \sum_{b=1}^n \left[ - \max_{a_b \in [0, 1]} \left\{ (1 - \pi_b) r_b a_b - \pi_b C(a_b) \right\} + \pi_b (\gamma \boldsymbol{V}_{pat}(b) - m_{s,b}) \right]
\end{align*}
as required. 
\end{proof}

\subsection{Proof of Lipschitz Constant in Lemma 2}
\begin{proof}[Proof]
Let $\delta > 0$, then the Lipschitz constant $L$ can be taken as,
\begin{equation}
    L \leq \max \left\{ \left| \frac{g_b(1+2\delta) - g_b(1+\delta)}{(1+2\delta) - (1+\delta)} \right|, \left| \frac{g_b(-2\delta) - g_b(-\delta)}{(-2\delta) - (-\delta)} \right| \right\} \label{L}.
\end{equation}
We can see that,
\begin{align*}
    & \argmax_{a \in [0,1]} \left\{ (1 - \pi_b) r_b a - \pi_b C(a) \right\} = \begin{dcases} 1 \text{ if } \pi_b < 0 \\ 0 \text{ if } \pi_b > 1 \end{dcases} \\
    \implies & \max_{a \in [0,1]} \left\{ (1 - \pi_b) r_b a - \pi_b C(a) \right\} = \begin{dcases} (1 - \pi_b)r_b - \pi_b C(1) &\text{ if } \pi_b < 0 \\ 0 &\text{ if } \pi_b > 1 \end{dcases} 
\end{align*}
and so then we can evaluate (\ref{L}) since,
\begin{equation*}
    \left| \frac{g_b(1+2\delta) - g_b(1+\delta)}{(1+2\delta) - (1+\delta)} \right| = \frac{1}{\delta} \left| \delta(\gamma \boldsymbol{V}_{pat}(b) - m_{s,b}) \right| = \left| \gamma \boldsymbol{V}_{pat}(b) - m_{s,b} \right|
\end{equation*}
and,
\begin{equation*}
    \left| \frac{g_b(-2\delta) - g_b(-\delta)}{(-2\delta) - (-\delta)} \right| = \frac{1}{\delta} \left| - \delta C(1) - \delta r_b - \delta (\gamma \boldsymbol{V}_{pat}(b) - m_{s,b}) \right| = \left| r_b + C(1) + \gamma \boldsymbol{V}_{pat}(b) - m_{s,b} \right|.
\end{equation*}
From the above calculations and (\ref{L}) we conclude that,
\begin{align*}
    L &= \max \left\{ \Big| r_b + C(1) + \gamma \boldsymbol{V}_{pat}(b) - m_{s,b} \Big| ,  \Big| \gamma \boldsymbol{V}_{pat}(b) - m_{s,b} \Big| \right\} \\
    & \leq r_b + C(1) - (\gamma \boldsymbol{V}_{pat}(b) - m_{s,b}) 
\end{align*}
since $r_b$ and $C(1)$ are positive but $\gamma \boldsymbol{V}_{pat}(b)$ and $-m_{s,b}$ are negative. This conludes the proof. 
\end{proof}

\subsection{Proof of Theorem 1}
\begin{proof}[Proof]
Recall that $\boldsymbol{\pi}^*$ is the optimal solution,  $\tilde{\boldsymbol{\pi}}$ is found using Algorithm 1 and $\hat{\boldsymbol{\pi}}$ is constructed using Algorithm 3. Denote the value of $\hat{\boldsymbol{\pi}}$ after iteration $l$ of Algorithm 3 by $\hat{\boldsymbol{\pi}}(l)$. We let,
\begin{equation*}
    K_m = m \prod_{b=1}^n [C(1) + r_b],
\end{equation*} and denote the output of Algorithm 1 when using $K = K_m$ as $\tilde{\boldsymbol{\pi}}_{K_m}$. Furthermore, we respresent its value after a step of Algorithm 1 with value $k$ by $\tilde{\boldsymbol{\pi}}_{K_m}(k)$. Since $K_t$ is divisible by every $C(1) + r_b$, it follows that for every $l$ there exists a $k$ such that,
\begin{equation*}
    \sum_{b=1}^n \hat{\pi}_b(l) = \sum_{b=1}^n \tilde{\pi}_{K_m,b}(k)
\end{equation*}
and we denote this $k$ by $k_l$. 

We prove by induction that for every $l$ we have $\hat{\boldsymbol{\pi}}(l) = \tilde{\boldsymbol{\pi}}_{K_m}(k_l)$. This is clearly true when $l = 0$ since $ \hat{\boldsymbol{\pi}}(0) = \tilde{\boldsymbol{\pi}}_{K_m}(k_0) = \boldsymbol{0}$. Assume that for a given $l$ we have $\hat{\boldsymbol{\pi}}(l) = \tilde{\boldsymbol{\pi}}_{K_m}(k_l)$. In Step 3 of Algorithm 3 we find,
\begin{equation*}
    j \in \argmax_{b \in [n]} \left\{ \frac{g_b(\hat{\pi}_b(l) + x_b) - g_b(\hat{\pi}_b(l))}{x_b} \right\} .
\end{equation*}
Since $g_b$ is linear on the interval $\pi_b \in [\hat{\pi}_b(l), \hat{\pi}_b(l) + x_b - \frac{1}{K}]$, this implies that,
\begin{equation*}
    j \in \argmax_{b \in [n]} \left\{ g_b \left( \pi_b+\frac{1}{K} \right) - g_b(\pi_b) \right\}
\end{equation*}
for $\pi_b \in [\hat{\pi}_b(l), \hat{\pi}_b(l) + x_b - \frac{1}{K}]$. Thus, at every step between $k_l$ and $k_{l+1}$ in Algorithm 1 we increase $\tilde{\pi}^j$ and so,
\begin{equation*}
    \tilde{\pi}_{K_m, b}(k_{l+1}) = \tilde{\pi}_{K_m, b}(k_l) + \mathbbm{1}(b = j) x_b = \hat{\pi}_b(k_l) + \mathbbm{1}(b = j) x_b = \hat{\pi}_b(l+1) .
\end{equation*}
Thus, we have $\tilde{\boldsymbol{\pi}}_{K_m}(k_{l+1}) = \hat{\boldsymbol{\pi}}(l+1)$, so by induction we deduce that $\tilde{\boldsymbol{\pi}}_{K_m} = \hat{\boldsymbol{\pi}}$. For each $m \in \mathbb{N}$ the proximity result by \cite{Hochbaum1994} implies that,
\begin{equation*}
     \| \boldsymbol{\pi}^* - \tilde{\boldsymbol{\pi}}_{K_m} \|_{\infty} \leq \frac{n}{K_m}
\end{equation*}
which therefore means that since $K_m \to \infty$ as $m \to \infty$ we must have $\| \boldsymbol{\pi}^* - \tilde{\boldsymbol{\pi}}_{K_m} \|_{\infty} \to 0$. Since we know that $\| \boldsymbol{\pi}^* - \tilde{\boldsymbol{\pi}}_{K_m} \|_{\infty} = \| \boldsymbol{\pi}^* - \hat{\boldsymbol{\pi}} \|_{\infty}$ for all $m \in \mathbb{N}$ we must have that $\boldsymbol{\pi}^* = \hat{\boldsymbol{\pi}}$.
\end{proof}

\end{document}